\documentclass[reprint, amsmath,amssymb,aps,prl]{revtex4-2}

\usepackage{amsmath,amssymb,amsfonts}
\usepackage{amsthm}
\usepackage{physics}
\usepackage{graphicx}
\usepackage{dcolumn}
\usepackage{bm}
\usepackage{xcolor}
\usepackage{booktabs, makecell}

\begin{document}

\title{Non-unital noise in a superconducting quantum computer as a computational resource for reservoir computing}

\author{Francesco Monzani}
\author{Emanuele Ricci}
\author{Luca Nigro}
\author{Enrico Prati}
\email{enrico.prati@unimi.it}
\affiliation{Dipartimento di Fisica "Aldo Pontremoli", Università degli Studi di Milano, Via Celoria 16, 20133, Milan, Italy}
\newtheorem{theorem}{Theorem}
\newtheorem{definition}{Definition}
\newtheorem{lemma}{Lemma}

\begin{abstract}
Despite theoretical promises, most machine learning algorithms fail on current quantum computers due to overwhelming noise. Indeed, noise may limit their ability to retain and manipulate information over time, which is especially problematic when dealing with sequential data.
Nevertheless, many recurrent architectures are most efficient if they can process information by progressively reducing the correlation on inputs from earlier time steps. Among them, reservoir computing represents a major paradigm of artificial intelligence for processing time-dependent tasks thanks to the involvement of fading memory. Despite attempts to demonstrate quantum reservoir computing based on intentional perturbations of the network, why the intrinsic noise of the quantum circuit can be straightforwardly exploited remained unknown. We show the feasibility of reservoir computing on a circuit of superconducting qubits based on the dissipation typical of NISQ devices. We prove that noise modeled by a non-unital quantum channel ensures the functioning of a quantum echo state network. In particular, amplitude damping is responsible for drastically improving the short-term memory capacity and reconstruction capability of the network by simultaneously providing fading memory and richer dynamics. Indeed, the damping is capable of preventing the loss of information due to repeated measurements of the circuit. Our experimental results pave the way for the application of reservoir computing methods in non-fault-tolerant quantum computers.
\end{abstract}
\maketitle
\section*{Introduction}
The intrinsic dissipation and decoherence in noisy intermediate-scale quantum (NISQ) computers pose a major limitation in many machine-learning protocols~\cite{PRXQuantum.1.020101}, restricting natural applications of quantum computing in artificial intelligence~\cite{prati2017quantum, Biamonte2017}. Nevertheless, dissipation may serve as a resource for quantum computation~\cite{Verstraete2009,rocutto2021complete,noe2024quantum}. 
Our results show that specific quantum noise of the hardware, namely noise modeled by non-unital quantum channels, significantly improves the short-term memory capacity and expressivity of a quantum network, proving the reliability of quantum reservoir computing as a valid computational architecture for memory-dependent tasks employing a gate-based quantum computer. We emphasize that an in-depth study of the beneficial effects of non-unital noise in quantum machine learning is only in its early stages~\cite{non_uni_prx}, unveiling, for example, a promising application in avoiding barren plateaus in variational problems~\cite{non-uni.barren}. 
Reservoir computing is a well-established supervised machine learning algorithm that employs a fixed neural network -- the reservoir, to process time-dependent information. In addition to initial digital and neuromorphic proposals, respectively, the echo state network~\cite{jaeger2001short} and the liquid state machine~\cite{maass.nat}, reservoir computing has proven effective in a wide range of unconventional deployments~\cite{rc.naka}. Among several physical implementations~\cite{tanaka2019recent},  the exponentially large computational capacity of quantum systems has been harnessed for reservoir computing~\cite{grollier, qrc_opp}, employing photonic circuits~\cite{Spagnolo2022, ph1, ph2}, bosonic oscillators~\cite{nokkala2021gaussian, boso1}, fermionic systems~\cite{ghosh2021realising, ghosh2019quantum}, neutral atoms~\cite{neutral.atoms} and spin networks~\cite{harnessing, naka.2019, chen1, Mujal2023, zamb_dissi,de2023silicon}. Focusing on superconducting quantum computers, we refer to Ref.~\cite{chen2, molteni} for early implementations, more recently empowered by mid-circuit measurements~\cite{kubota2023temporal}. Reservoir computing appears particularly suitable for leveraging the high-dimensional computational space associated with a quantum system \cite{hilb}, as it does not require any training in the parameters of the quantum evolution, thus avoiding typical issues associated with the training of a variational quantum circuit \cite{barren,larocca2025barren}. 
An essential feature for reservoir computing is the so-called fading memory~\cite{maass, monz}. Namely, a reservoir network is most efficient when dealing with the approximation of functional with past dependence if the information is processed by progressively reducing the emphasis on inputs from earlier time steps. Dissipation has recently been recognized as a powerful computational resource in reservoir architectures that employ spin dynamics~\cite{zamb_dissi}. In contrast, phase errors seem to have no utility for quantum computing, since they are theoretically associated with the loss of quantum information. Concerning reservoir computing employing superconducting quantum computers, artificial manual reset of the information flow has proven to be a reliable resource for fading memory in memory-dependent tasks~\cite{molteni,hu2024overcoming, art.res}. In this work, we advance previous approaches by demonstrating why the inherent noise in superconducting quantum processors can be exploited to drive the dynamics required for an effective quantum echo state network. In particular, we show that the damping is essential to guarantee the effective functioning of a gate-based echo state network that operates relying on mid-circuit measurements. Indeed, dissipation can prevent the loss of information due to repeated measurements of the quantum circuit, which steers the internal state of the reservoir toward the maximally mixed state, making the encoding of new input values impossible. Thus, we provide strong theoretical support to our results and to previous empirical findings, which already highlighted the benefit of using amplitude damping in quantum reservoir computing \cite{kubota2023temporal, Domingo2023}. We implement reservoir computing on a circuit of 7 superconducting qubits, and we perform its emulation by including a realistic noise model. By testing our quantum echo state network on standard memory-dependent benchmarks, we recognize which type of noise benefits the performance. Thus, we show that noise modeled by a non-unital channel is needed to simultaneously guarantee memory capacity and accuracy in nonlinear time-dependent tasks. This confirms early results for memory-independent classification tasks~\cite{Domingo2023}. We recall that a quantum channel is unital if it preserves the identity in the operator space. Remarkably, the quantum channel modeling the noise of a superconducting quantum computer falls within such a class of non-unital noise. As typically observed in recurrent networks~\cite{crit_1, pt_zamb}, we identify a critical regime tuned by noise intensity, in which several network capabilities, such as short-term memory capacity and expressivity, are maximized. In this respect, we suitably slowed the execution of the circuit accordingly to maximize the learning accuracy. Finally, as a further theoretical result,  we prove the universality of our gate-based echo state network under the effect of non-unital quantum channels \cite{monz_uni}, enlarging the class of known universal quantum reservoir computers \cite{ nokkala2021gaussian, zamb_dissi, chen2}. We recall that the universality of a neural network architecture is responsible for its computational effectiveness, since it ensures that any input-output mapping can be approximated with arbitrary precision~\cite{hornik.univ.fnn, maass.nat,grig2018}. Our experiment paves the way for real-world application of quantum reservoir computing on noisy intermediate-scale and early fault-tolerant quantum computers.
\section*{Results}
\subsection*{Problem settings}
The task of a reservoir computer for the processing of temporal data consists of defining a mapping -- usually called \textit{filter}, in the framework of reservoir computing, sending an input time series $ u  \in \mathcal{I} $ into another target time series $y \in \mathcal O$, through read-out operations of the information encoded in the reservoir. The schematic structure is represented in Fig.~\ref{tab.1}. Denoting $\mathcal{S}(\mathcal{H})$ the set of density operators $\rho_t$ on the Hilbert space $\mathcal{H}$ that describes some quantum system, a quantum reservoir computer is described by the equations
\begin{align}
    \begin{cases}
        \rho_{t+1} = \mathcal T(\rho_t, u_{t+1})\\
        y_{t+1} = h(\rho_{t+1})
    \end{cases}\,.
\end{align}
expressing the time evolution of the density matrix over discretized time intervals denoted as $\dots\rho_{t-1},\rho_{t},\rho_{t+1}\dots$ Here, $\mathcal T:\mathcal{S}(\mathcal{H}) \rightarrow \mathcal{S}(\mathcal{H})$ is any quantum channel that describes the dynamics of the quantum system and $h$ is a suitable readout operation. The reservoir computer defines the filter, consisting of a unique and causal operator, $\mathcal{C}\colon \mathcal{I}\rightarrow \mathcal{O}$ defined by the mapping
\begin{equation}
y_t = \mathcal C(u)_t = \mathcal{C}\left(u_{|_t}\right) = h\left(\mathcal{T}(\rho_{t-1},u_t)\right)
\end{equation}
where $u_{|_t} = (u_0,\dots,u_{t-1},u_t)$ indicate the input sequence truncated at time $t$. In this work, we employ a gate-based quantum echo state network \cite{jaeger.first.long} to process time-dependent information. We rely on the logical embodiment of the node of the reservoir by the $4^N-1$, where N is the number of qubits, basis elements expressed by the Pauli operators $(I,\sigma_x,\sigma_y,\sigma_z)$ as from the implementation introduced by Ref.~\cite{harnessing}. In this framework, the output nodes correspond to the readout of single qubits $..I\otimes \sigma_{Z(i)} \otimes I,...$.  Its architecture is described in detail in the Methods. 
We test our computational machine by reproducing nonlinear, time-dependent mappings $S(u) = \hat y$ between two time-dependent real sequences $u = \left\{u_t\right\}_{t=0,\dots, L}$ and $\hat y = \left\{\hat{y}_t\right\}_{t=0,\dots, L}$. Specifically, we train the reservoir computer to minimize the distance
\begin{equation}
\mathrm{dist}(\mathcal{C}(u), S(u)) \,.
\end{equation}
To demonstrate the feasibility of quantum reservoir computing on a noisy gate-based quantum computer, we proceed by first proving that some properties of the quantum channel $\mathcal{T}$, specifically non-unitality and contracivtit, satisfied by amplitude damping, ensure the effective functioning of a gate-based echo state network relying on mid-circuit measurements, thus furnishing strong theoretical support to the known results \cite{kubota2023temporal, Domingo2023}. Then, we show the positive impact of amplitude damping with respect to other noisy channels by employing well-established benchmark tasks in the literature of reservoir computing. More specifically, we consider NARMA-n tasks, up to $n=8$, as described in the Methods. Given that realistic noise models indicate non-unital noise as a significant feature of superconducting quantum computers, we proceed with a hardware implementation on an IBM quantum computer to experimentally validate the learning capabilities.
\begin{figure*}[htb]
\centering
\includegraphics[width=\textwidth]{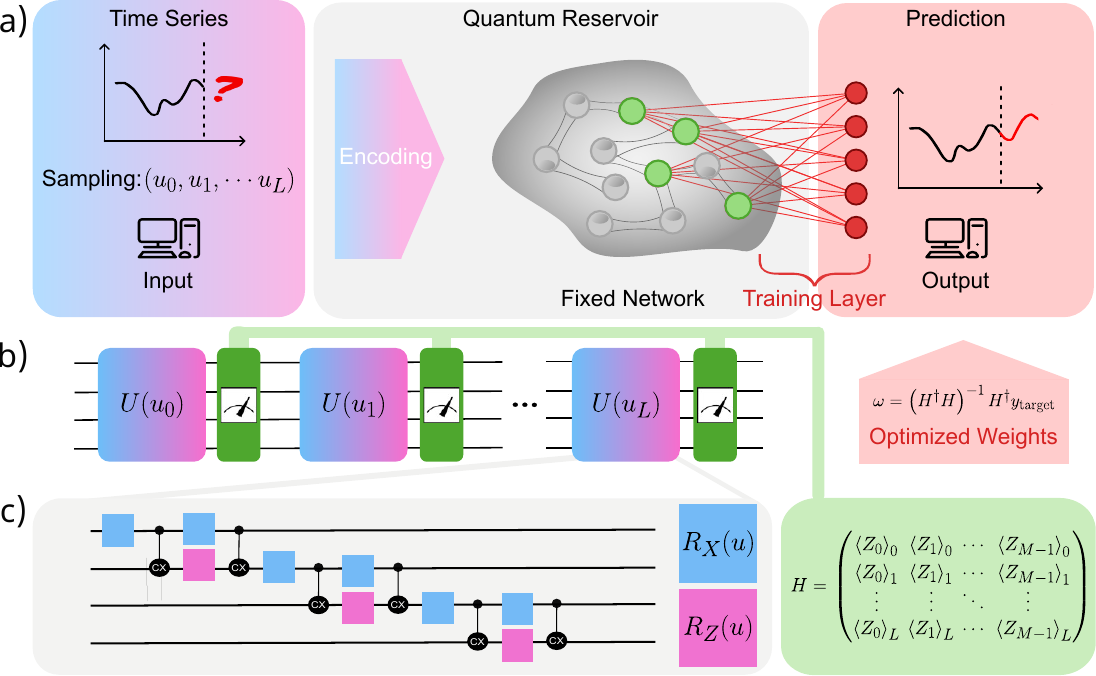}
    \caption{\textbf{The architecture of quantum reservoir computing.} a) The time-dependent information in the input is sampled to produce the input time series. The information is stored in the logical nodes described by the Pauli basis associated with the qubits. During the evolution of the reservoir, the information is extracted by mid-circuit measurements of the true nodes (in green) of the network. The reservoir signal is the outcome of the measurements. The only trained part of the system is the readout weights $\omega$ (red connections). b) The circuit is employed as the reservoir of the quantum echo state network. The input values are encoded in parametric unitary gates. Mid-circuit measurements are performed to extract the reservoir signal online. After the execution, the optimal weights of the linear readout are computed through the pseudoinverse of the overall reservoir signal along the training interval, in order to minimize the distance from the target time series. c) The precise architecture of the gates in each unitary layer. The inputs are encoded in the composition of rotations along the $X$ (in blue) and $Z$ (in pink) axis of the Bloch sphere. Z Pauli operators are measured as true nodes. CNOT operators allow for the entanglement of the qubits. The qubits are left in the collapsed state after each measurement. In the work, we fix $N=7$ as the number of qubits. Finally, the reservoir signal, namely the expected value of each Z operator, is collected in the matrix $H$.}
    \label{tab.1}
\end{figure*} 
\begin{figure*}[htb]
\centering
\includegraphics[width=0.95\textwidth]{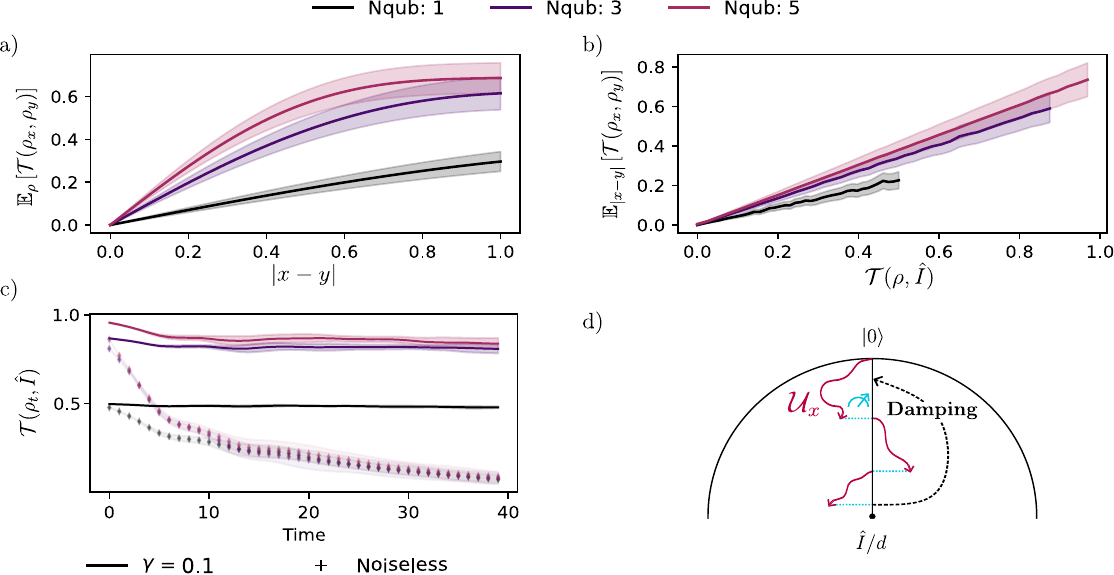}
    \caption{\textbf{Information retention induced by damping.} a-b) Mean trace distance of the internal state of the reservoir after input injection for, respectively, (a.) increasing distance between inputs and (b.) increasing trace distance from $\hat{\mathcal{I}}$, recalling that this is bounded by $1-\frac{1}{2^{N_a}}$. In the first case, the average is computed over 100 initial states $\rho_0$, with distance from $\hat{\mathcal{I}}$ sampled from a uniform distribution $U(0, \frac{1}{2^{N_a}})$.  In the second case, the average is computed over 100 uniformly sampled input values. c) Trace distance of the circuit from the $\hat{\mathcal{I}}$ along 40 time steps of input encoding and mid-circuit measurements, averaged over 10 repetitions with random inputs. If the dynamics is affected by amplitude damping (full line), the state remains deviated from $\hat{\mathcal{I}}$ during the evolution d)  A sketch of the evolution of the internal state of a 1-qubit circuit reservoir during three steps of input encoding and midcircuit measurements. The unitary evolution does not modify the distance from the origin of the Bloch sphere, while the measurement collapses the state, reducing the trace distance from the maximally mixed state. The damping prevents the information loss by steering the state towards $\ket{0}$.}
    \label{tab.sep}
\end{figure*} 
\subsection*{Quantum reservoir computing with mid-circuit measurements}
An approach to gate-based quantum reservoir computing, also pursued in this work, involves dealing with a fully accessible reservoir of $N_a$ qubits \cite{rep, kubota2023temporal}, without any reset of the qubits, and thus exploiting quantum trajectories as a memory retention mechanism. Specifically, the input encoding is performed through parametric unitary matrices, and the information is decoded by measuring the qubits online along the execution of the circuit by exploiting mid-circuit measurements. After each measurement, the qubits are let evolve starting from the collapsed state. Thus, information retention relies on the statistics of the observables of the accessible qubits only, without the need for a hidden reservoir, as described in the Methods. Nonetheless, it is well known that repeated measurements on a quantum system progressively induce information loss. This hinders the computational capabilities of the reservoir computer, as we discuss later in this Section. Our work shows how the controlled damping of the qubits can overcome this issue, making the algorithm stable over time intervals significantly longer than the typical information survival time in the measured system. 
In particular, this work proves that the effect of damping compensates the quantum information loss due to mid-circuit measurements, thereby ensuring the persistence of the algorithm. We therefore explain also previous empirical results on the beneficial effect of non-unital noise on the functioning of a gate-based echo state network for temporal processing \cite{kubota2023temporal, rep}. 
\subsection*{Damping prevents the decoherence of reservoir state}
Now we proceed by providing such theoretical explanation of the need of damping for the functioning of a gate-based echo state network. A fundamental prerequisite for the effectiveness of a reservoir computer is its capacity to separate distinct inputs reliably. Thus, the reservoir dynamics must be sufficiently expressive to produce distinguishable states $\rho_x = \mathcal{U}_x \rho$ and $\rho_y = \mathcal{U}_y \rho $ after input injection. Here, $\mathcal{U}_j\,,\, j = x,y$ denotes the unitary encoding of the input value $j$, described in detail in the Methods. The separability ensures that the internal states of the reservoir encode meaningful distinctions in the input, which is essential for downstream tasks such as classification or prediction. As stated by Holevo–Helstrom theorem \cite{Helstrom1969, holevo1973bounds}, the distinguishability of two different quantum states may be quantified by their trace distance, namely $\mathcal{T}(\rho,\sigma) = \frac{1}{2}\norm{\rho-\sigma}_1$ with $\norm{A}_1 = \text{Tr}\left(\sqrt{A\,A^\dag}\right)$. By continuity of the unitary evolution, the difference in trace norm of two internal states of the reservoir, corresponding to distinct values of the input, is related to the distance between the initial state $\rho$ and the maximally mixed state $\hat{\mathcal{I}}/d$, namely
\begin{align}\label{continuity}
    \mathcal{T}\left(\rho_x - \rho_y\right) \le C(U)\cdot C(|x-y|)\cdot \mathcal{T}\left(\rho - \hat{\mathcal{I}}/d \right)
\end{align}
Here, we are denoting with $C(U)$ and $ C(|x-y|)$ two constants that depend on the unitary and the distance input values, respectively, and with $d = 2^N$ the dimension of the Hilbert space. We quantify the expression in Eq. \eqref{continuity} in Fig.~\ref{tab.sep}, showing that a larger deviation from $\hat{\mathcal{I}}$ leads to enhanced separability between distinct inputs. Nonetheless, the repeated mid-circuit measurement leads to quantum decoherence, and the internal state $\rho_t$ is driven toward the maximally mixed state, $\rho_t \rightarrow \hat{\mathcal{I}}/d$.  Indeed, any unitary rotation does not increase the distance from the identity, since
\begin{align}
\norm{\mathcal{U}_x \rho - \hat{\mathcal{I}}/d}_{1} = \norm{\mathcal{U}_x( \rho - \hat{\mathcal{I}}/d)}_{1} = \norm{\rho - \hat{\mathcal{I}}/d}_1
\end{align}
while projective measurements decrease it, namely
\begin{align}
    \norm{\Pi \rho \Pi^\dag - \hat{\mathcal{I}}/d}_1 =  \norm{\Pi (\rho - \hat{\mathcal{I}}/d )\Pi^\dag }_1 \le \norm{\rho - \hat{\mathcal{I}}/d }_1\,.
\end{align}
Here the equality is realized if and only if \mbox{$\rho = \bigotimes^{N_a} |i\rangle \langle i|$}, with $i = 0,1$, namely is the tensor product on the eigenvalues of the projective measure. Consequently, the dynamics produces progressively less distinguishable states. Eventually, after an effective time window, the state of the reservoir becomes independent from the input, since $\mathcal{U}_x\,\hat{\mathcal{I}} = \hat{\mathcal{I}}$ for any input value $x$. 
In the following, we demonstrate how dissipation prevents the loss of information, enabling stable computation over an extended period. In particular, the concentration of the reservoir in the maximally mixed state $\hat{\mathcal{I}}$ can be prevented by exploiting the action of a non-unital channel. We recall that a quantum channel $\mathcal E$ is non-unital when $\mathcal E (\hat{\mathcal{I}}) \!= \hat{\mathcal{I}}$. As an example, the amplitude damping channel can be adopted for this purpose, as it tends to relax the qubit state toward $\ket{0}$, with a decay rate denoted with $\gamma$ throughout the work. Moreover, if each qubit is affected by damping, the effect of dissipation is not hindered by the dimensionality of the circuit, as shown in Fig.~\ref{tab.sep}.
\subsection*{Universal approximation property ensured by non-unital and contractive channel}
As a further theoretical support explaining the crucial role of damping, or other similar non-unital channels, we demonstrate that its effect guarantees the universal approximation property of the gate-based echo state network considered in this work and Ref. \cite{kubota2023temporal}. 
A family of computational architectures is universal for a certain set of functionals if, for any of these mappings, there exists an instance within the family that approximates it with arbitrary precision \cite{hornik.univ.fnn}. 
In particular, the universality of a family of reservoir computers for fading memory mappings relies on some general sufficient conditions \cite{maass.nat, monz_uni}. 
Precisely, a reservoir computer is universal for temporal tasks if the internal dynamics of the reservoir has fading memory and can discriminate between distinct inputs. The formal theorem is detailed in the Supplementary Material attached to this work. 
Then, the polynomial subalgebra of functionals is constructed by spatial multiplexing, as shown in Ref. \cite{chen2}. In particular, we recall that a noise modelled by a non-unital and strictly contractive channel ensures the fading memory of the reservoir computer, as already proved in Ref. \cite{kubota2023temporal}. 
Moreover, we prove in the Appendix that even separability is ensured when the quantum channel has a unique fixed point away from the identity. A similar result was stated and proved in Ref.~\cite[Theorem 2]{qrc} . This is the case, for example, for the amplitude damping channel. Thus, we prove that a gate-based quantum echo state network is universal under the effect of a class of noise that includes the amplitude damping, thus providing further support to the empirical findings. 
We refer to the Supplementary Materials for the detailed mathematical proof. 
\subsection*{Impact of noise on the fading memory}
The action of noise provides fading memory to the system by gradually reducing the network's dependence on past inputs. Indeed, the action of the quantum channel sequentially modifies the state of each qubit, contributing to the loss of information over time. We remark that fading memory is essential in reservoir computing, as it enables the network to integrate online new information.
To assess how increasing noise affects the network's dependence on past inputs, we analyze its impact on the correlation between consecutive qubit measurements. See Methods for a detailed description of the architecture of our echo state network. After repeating the execution for $S\sim10^5$ shots, to reconstruct the average value of the $\sigma_{Z(i)}$ Pauli observables, we compute, for any measured qubit,  
\begin{equation}
    \mathrm{corr}(\overline{z}^i_t,  \overline{z}^i_{t-k}),
\end{equation} where the vector $\overline{z}^i_t = \left\{(z_t)_s\right\}_{s=1,\dots,S}$ contains the outcomes of measurements for each repetition of the experiment. Then, we compute the mean correlation over the whole execution by averaging over $t$. Fig.~\ref{tab.2} shows the mean correlation over time for increasing time windows $k$. As expected, higher noise intensity boosts the progressive loss of correlation. The loss of correlation appears comparable for different noise models. Thus, fading memory alone does not a priori justify the beneficial effect of amplitude damping on the short-term memory capacity of the network and its accuracy in learning tasks. From a theoretical point of view, fading memory is related to the contractivity of the evolution of the echo state network \cite{grig2018}. In our model, the contractivity in trace norm is ensured by the action of noise, as discussed in the Supplementary Material.
 \begin{figure*}[htb]
\centering
\includegraphics[width=0.9\textwidth]{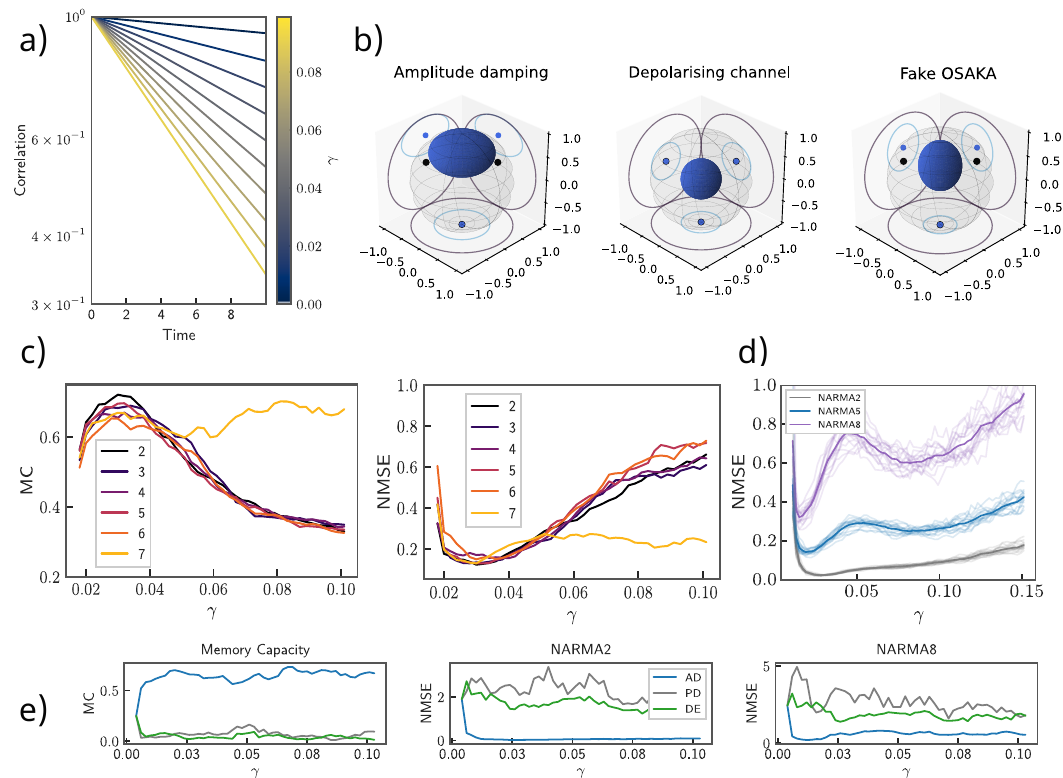}
    \caption{\textbf{Effect of noises on learning.} a) Decay over time in the mean correlation in subsequent measurements is more pronounced when amplitude damping intensity increases. Yellow lines indicate a higher intensity of the noise. The thickness of lines is proportional to the accuracy in NARMA5 prediction. b) From left to right, the blue area is the image of the quantum channel on the Bloch sphere. Black and blue dots are the projections of the sphere's center, respectively, before and after the application of the channel. For clarity, simulations are for $\gamma = 0.6$ and tenfold intensity of noise in the IBM\_OSAKA backend. c) Accuracy in NARMA-8 task and short-term memory capacity for various numbers of measured qubits in the 7 qubits register, under the effect of increasing amplitude damping. $\gamma = 0$ amounts for noiseless execution of the circuit. Yellow lines correspond to complete measurements. Running average on a window of ten values of $\gamma$ is performed for smoother visualization. d) Accuracy in NARMA-n tasks under the effect of increasing amplitude damping.  Here, the reservoir is made of 7 qubits, and all the qubits are measured in the experiment. Results for a swarm of 20 executions are plotted in transparency, mean values are in bold in the picture. Running average on a window of ten values of $\gamma$ is performed for smoother visualization. The optimal noise regime corresponds to $\gamma \sim 0.03$. e) Accuracy in reproducing NARMA tasks and memory capacity for increasing intensity of amplitude damping (in blue), phase damping (in grey), and depolarizing channel (in green). The performance of the noiseless execution corresponds to $\gamma = 0$. Only amplitude damping ensures the learning of the network. Again, the complete measurement of a reservoir with 7 qubits is employed}
    \label{tab.2}
\end{figure*} 
\begin{figure*}[htb]
\centering
\includegraphics[width=0.9\textwidth]{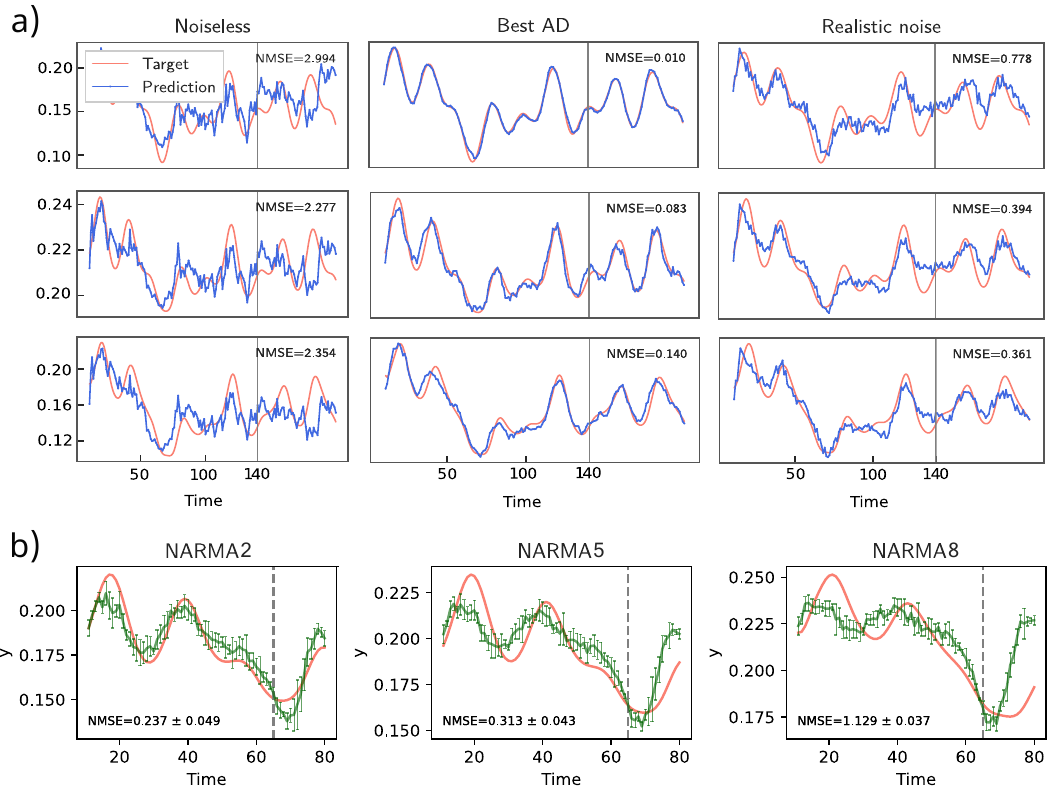}
    \caption{\textbf{Reconstruction of NARMA tasks on simulators and quantum hardware.} a) The reconstruction of NARMAn tasks for, respectively, the noiseless execution of the circuit, the execution affected by optimal amplitude damping, and the realistic noise emulating the one in IBM\_OSAKA quantum backend. Optimal values of noise are $\gamma = 0.024, 0.03, 0.014$ for NARMA2, NARMA5 and NARMA8, respectively. The grey vertical line separates the training and test intervals in the output sequence. The amplitude damping ensures the best learning of the system, which is slightly deteriorated under the composition of several noisy quantum channels representing the realistic noise.  We highlight that, in the noiseless execution, the target is met for around 50 time-steps. After that, the internal state of the reservoir is driven to the maximally mixed state, becoming unresponsive as discussed in Fig. \ref{tab.sep}. b) The experimental results of reconstruction of NARMA2, NARMA5 and NARMA8 (from left to right) tasks employing IBM\_BRISBANE quantum backend. The experiment is repeated 5 times. The mean results are reported in the green line, with error bars indicating a $1\, \mathrm{s.d.}$ interval. The input length is reduced, in order to reduce the depth of the circuit to fit the technical limitations of the quantum hardware. Precisely, we sample 80 values for $u_t$. Reconstruction deteriorates for increasing the depth in memory required, indicating the necessity of the implementation of error mitigation techniques that allow for reducing the unital noise in the evolution.}
    \label{tab.3}
\end{figure*} 
\subsection*{Effect on learning of simulated quantum noise models}
In this Section, we study the effect of different noise models on the memory capacity and the learning capability of a gate-based echo state network \cite{kubota2023temporal}.  Concerning the simulation of diverse models of quantum noise, their effect on the learning capabilities of the network is investigated by systematically tuning its intensity in numerical experiments. Throughout the whole Results section, the parameter that tunes noise intensity is referred to as $\gamma$. Our results confirm previous findings \cite{kubota2023temporal, Domingo2023}, showing that amplitude damping is drastically improving the performance of the reservoir computer.
 After confirming that any contractive noisy channel ensures the fading memory of the network, we investigate their positive on the predictive capabilities of the architecture using two standard benchmarks for recurrent networks, namely the measure of short-term memory capacity (MC) and the non-linear autoregressive moving average (NARMA) task, presented in detail in Subsection "Benchmark for performance analysis". The short-term memory capacity measures the ability of the network to reproduce online a delayed window of the time-dependent input, while the NARMA task requires approximating a non-linear functional that depends on a fixed amount of the recent information in the input time series. Three noise models are applied separately, namely amplitude damping, phase damping, and depolarizing. Among them, only the amplitude damping (AD) is non-unital. Remarkably, we observe that the action of the only amplitude damping drastically enhances the memory capacity of the network. Consistently, the higher memory capacity is reflected in better accuracy in reproducing the time-dependent task. On the other hand, phase damping (PD) and depolarizing (DE) noise have detrimental effects on both memory capacity and accuracy. These results are summarized in Fig.~\ref{tab.2}, showing the normalized mean square error (NMSE) and MC for increasing the intensity of the three noises considered. The beneficial effect of amplitude damping is justified by its non-unital nature as a quantum channel. Indeed, the action of a non-unital channel enriches the dynamics of the reservoir allowing for a wider exploration of the phase space and, consequently, greater storage of information. We discuss more details about non-unital channels and theoretical insight into the role of unitality in quantum reservoir computing in the Supplementary Material. To anticipate the behavior on NISQ hardware, we repeated the experiment by applying a channel that replicates the noise model of the backends provided by IBM. A realistic noise model for current superconducting hardware consists of the application of the composition of quantum channels, including, but not restricted to, the three noise models considered in this work. Thus, recalling that quantum channels are non-expanding by definition \cite{qrc}, we may assume that the dynamics is strictly contractive since the composition of non-expanding channels and strictly contractive channels is in turn a contractive channel. Moreover, we confirmed that the realistic noise is non-unital by analyzing its effect on the Bloch sphere, as shown in Fig.~\ref{tab.2}. As expected, the network still shows a remarkable improvement in learning capabilities when compared with noiseless executions, even if the accuracy in reproducing tasks is slightly deteriorated compared to optimal pure amplitude damping. The results in the reconstruction of the NARMA task are reported in Fig.~\ref{tab.3}. 
\subsection*{Impact of partial and global measurements on the optimal noise regime}
By tuning the intensity of pure amplitude damping in simulations, the learning capabilities of the network are maximized for $\gamma \sim 0.02$. corresponding to which the echo state performs best for any task. This suggests the existence of an optimal critical regime for the network's expressivity. 
The emergent properties of a quantum circuit depend on the rate of mid-circuit measurements performed on it \cite{koh2023}. This turns out to be the case also for the expressivity and the short-term memory capacity of the circuit acting as the reservoir. Indeed, when performing a complete measurement of the qubits in the register, the effect of amplitude damping appears stable under increasing noise intensity, at least for small values of $\lambda$ (approximately, $\gamma< 0.1$). As reported in Fig.~\ref{tab.2}, this optimal value of $\gamma$ is followed by a plateau of suboptimal values, for which the performance remains acceptable. In the case of partial measurements of the qubit register, while the optimal noise regime persists for the same noise intensity with no deterioration in memory capacity or predictive accuracy, the plateau of suboptimal values is immediately disrupted by increasing the noise intensity. Moreover, the best performances of the network essentially do not depend on the number of measured qubits.
\subsection*{Experiment on superconducting quantum computer}
We confirm our findings by experimenting on the IBM-Brisbane superconducting quantum computer, guided by the results of the noisy simulations. First, we employ the complete measurement of the register in order to improve stability under various noise intensities. As a further optimization, we introduce a tunable delay in executing rotations, deploying them as $ RZ(\alpha) = RZ(\alpha/d)\circ \dots \circ RZ(\alpha/d)$ and the same for $RX$, where $\alpha$ denotes the angle of the desired rotation. Then, assuming that a delayed execution enlarges the amount of noise affecting the circuit evolution, we optimize the delay parameter $d$ by employing the emulator, before running the experiment on quantum hardware. However, there is a trade-off to consider. Although using multiple rotations instead of a single rotation on a quantum computer helps us achieve the desired increase in error, it also comes with a downside. Quantum computers have a limited allowance for the number of operations that can be submitted, which constrains our approach. We found an optimal compromise between performance and depth of the circuit for $d=4$. The experiment is conducted 5 times and the average results are shown with error bars of 1 standard deviation in Fig. \ref{tab.3}. The experiment shows that the learning capabilities are considerably better when compared to the ideal execution without noise performed on the simulator. However, the NARMA-8 task clearly shows that when processing longer information windows is required, the unital component of the intrinsic noise in the quantum computer dominates, leading to a progressive deterioration in performance. This highlights the need to implement appropriate error mitigation techniques to eliminate the effects of unital noise or to adopt suitable strategies to amplify the impact of non-unital noise, which is undoubtedly recognized as essential in simulations.
\section*{Discussion}
We demonstrate the beneficial effect of damping in gate-based quantum reservoir computing for time-dependent tasks. 
Firstly, we showed that dissipation is essential for the functioning of a gate-based echo state network. Indeed, the repeated measurements needed for information decoding naturally steer the internal state of the circuit towards the maximally mixed state, making the input encoding ineffective after a short time window. 
In this work, we show that non-unital noise, as amplitude damping, prevents this phenomenon, leading to a long-term effective reservoir computation. Moreover, as a further theoretical support to empirical findings, we identify the properties of the quantum dynamics of the reservoir that ensure the universal approximation property of the reservoir computer. 
Indeed, they turn out to be the same properties ensuring the functioning of the reservoir computer. 
Based on our theoretical finding, we confirm early empirical results \cite{kubota2023temporal, Domingo2023} by observing through numerical simulations that amplitude damping ensures great accuracy in benchmark time-dependent tasks. 
The learning is confirmed by conducting the experiment on the superconducting computer. Non-unitality of the evolution of the quantum circuit has already been recognized as a key resource for reservoir computing \cite{hu2024overcoming}. 
Here we show that it can be naturally induced by employing intrinsic noise in the system. Moreover, the network's response to noise qualitatively varies depending on whether complete or partial measurements are performed. 
In particular, in the case of complete measurements, the learning capabilities of the network remain effective over a sufficiently broad range of noise intensity values, confidently encompassing the amount of noise present in current quantum hardware. These facts pave the way for use-case results in real-world applications. 
Moreover, from the perspective of using quantum hardware, our results offer valuable insights that can inform best practices for developing and applying suitable error mitigation techniques. 
On the one hand, our observations suggest prioritizing the mitigation of dephasing and readout errors for the application of quantum reservoir computing protocols while ignoring the correction of dissipation effects. Partially effective error mitigation techniques may be employed to reduce the number of measurements since the optimal performances of the network persist in the case of partial measurement of the circuit. 
As a consequence, employing error control algorithms that allow positioning within this regime can lead to significant savings in memory consumption and execution time on quantum hardware by reducing the total number of mid-circuit measurements. 
Nevertheless, the relatively poor experimental results are due to the fact that current quantum computers already have error rates below the noise regime that ensures the functioning of an echo state network gate-based with mid-circuit measurements. 
For this reason, new algorithms that increase qubit damping are needed for effective implementation.
\section*{Methods}   
\subsection*{Architecture of the quantum echo state network}
Our gate-based echo state network employs an N-qubit circuit to encode and process time-dependent information. Then, a linear readout is trained to replicate a mapping $S(u) = \hat y$ between an input sequence $u = \left\{u_t\right\}_{t=0,\dots, L}$ and a target sequence $\hat y = \left\{\hat{y}_t\right\}_{t=0,\dots, L}$. 
In each execution, the time series is entirely processed by executing a single circuit that consists of unitary evolution, which is used for encoding the input value, and mid-circuit measurements of the qubits, whose outcomes are used for computing the expected value of the observables employed as readout. 
In particular, after the encoding of an input value, the mid-circuit measurement of the register is performed on the computational basis and, as opposed to other approaches \cite{hu2024overcoming, xiong2025fundamental}, the system is let evolve in the collapsed state after the measurements. 
The entire circuit, which encodes progressively the input time series, is executed for $S$ shots for computing the deterministic expected value of the observables used as readout. 
We describe in the following the one-step evolution of the systems, which is replicated for each value of the input time series. 
We denote with $\rho_t$ the density that describes the state of the N-qubit register at time-step $t$.
We remark that the density matrix $\rho_t$ depends formally on any value of the input $u_{t'}$ with $t' \le t$, when considered as a statistical mixture of the $S$ repetitions of the circuit, that are conditioned on the outcome of the mid-circuit measurements. Indeed, the outcome probabilities directly depend on the value of the input. Firstly, $\rho_t$ is subject to a unitary evolution affected by some noise, modelled as a quantum channel applied after the unitary evolution, namely 
 \begin{equation}\label{uno}
      \widetilde{\rho}_{t+1} = \mathcal{E}\left(U(u_{t+1}) \rho_t U^{\dag}(u_{t+1}) \right)\,.
 \end{equation} 
Here, $\mathcal{E}$ denotes the quantum channel that models the quantum noise, described later in the paper. The input $u_{t+1}$ is encoded in the angles of some rotations which constitute the unitary gate $U$. The circuit, described in detail in Fig~\ref{tab.1}, is composed of rotations RX, RY, and CNOT gates, which entangle pairs of qubits at a time. 
After the input injection, a mid-circuit measurement of the qubits is performed, and the system is let evolve in the collapsed state. 
Thus, for each execution of the circuit, the evolution is conditioned by the measurement outcomes. This mechanism is key to integrating past information into the system, enabling the processing of time-dependent data without requiring hidden qubits for memory retention, in contrast to the approach used in Ref. \cite{hu2024overcoming}. 
In particular, we assume that each qubit is measured separately and simultaneously, so that each measurement does not affect the outcome of the others. Formally, we write the effect of the measurement of $M$ out of $N$ qubits as
 \begin{align}
     \rho_{t+1} = \Pi_M \widetilde{\rho}_{t+1} \Pi_M^\dag 
 \end{align}
 where
 \begin{align}
     \Pi_M = \bigotimes_{s=1}^M |m_{s,t+1}\rangle\langle m_{s,t+1}| \otimes \mathbb{I}_{N-M} 
 \end{align}
denoting $m_{s,t+1}\in \left\{0,1\right\}$ the outcome of the measurement of the qubits $s$ at time $t+1$.
In particular, we measure either all the qubits employed in the circuits ($M=N$) or a smaller subset ($M<N$). In both cases, the reservoir signal at time-step $t$ is constructed online as the expectation value of the $Z_i$ Pauli operators, corresponding to the measured qubits, namely 
 \begin{equation}
     z_t = \left[\langle \sigma_{Z(1)} \rangle_{\rho_t}, \dots, \langle \sigma_{Z(M)}\rangle_{\rho_t}\right]^T \,.
 \end{equation}
 Explitcly, we construct the expectation value of the observable $\sigma_{Z(i)}$ by collecting the $S$ outcomes of each measurement of the $i^{th}$ qubit, and then computing 
 \begin{equation}
      \langle \sigma_{Z(i)} \rangle_{\rho_t} = \frac{1}{S}\left(\sum1_{m_{i,t}= 0} - \sum1_{m_{i,t}=1} \right),
 \end{equation}
 where $1_{m_{i,t}= k}$ assumes value $1$ when $m_{i,t}= k$ and $0$ otherwise, with $k = 0,1$, since indeed the eigenvalues of the observable $\sigma_Z$ are $1$ and $-1$, corresponding to the eigenstate $\ket{0}$ and $\ket{1}$, respectively.
 This procedure amounts to computing the expected value of the global observables $\mathbb{I}^{\otimes i-1}  \otimes \sigma_{Z(i)} \otimes \mathbb{I}^{\otimes N-i}$. This is consistent with the assumption about the effect of local measurements since these observables commute.
\subsection*{Training of linear readout}
Reservoir computing aims to best reproduce a given nonlinear map $S(u) = \hat y$ by training the linear readout $h$. We will give examples of such learning tasks $S(u)$ later in the work. After an initial washout period $T_{\text{wo}}$, required to erase the dependence on the initial state of the network, the reservoir signal is extracted until the end of the training interval $T_{\text{tr}}$ and it is stored in the matrix $H = \left\{z_t\right\}_{t=T_{\text{wo}} + 1,\dots, T_{\text{tr}}}$.  After the quantum evolution, the optimal readout weights are computed. Precisely, the weights of the linear readout $y = H w$ are chosen as the set of parameters $w$ that minimizes the distance $\norm{\hat{y} - y}$. It can be easily computed by linear regression or exploiting the Moore-Penrose pseudoinverse matrix, namely $ w = (H^T H)^{-1} H^T \hat y_{\text{tr}}$, where $\hat y_{\text{tr}} = \left\{\hat y _t\right\}_{t = T_{\text{wo}} + 1,\dots, T_{\text{tr}}}$ is the portion of target sequence used for the training. We refer to Ref.~\cite{molteni} for further details on the training algorithm. After the training, the predicted values of $y$ are given by 
\begin{equation}
    y_t = z_t \cdot w \qquad \text{for any } t \in [T_{tr} + 1, L] \,.
\end{equation}
It is worth mentioning that, thanks to the intrinsic multitasking nature of reservoir computing, the same reservoir signal can be exploited to reproduce different maps, only repeating the training to compute the optimal weights. This fact allows for the reuse of information stored in the reservoir for tasks that share the same input, notably reducing execution time. 
\subsection*{Benchmark for performance analysis}
\begin{table*}[tb]
   \centering
  \begin{tabular}{>{\raggedright\arraybackslash}m{2.4cm} 
                  >{\raggedright\arraybackslash}m{3.1cm} 
                  >{\raggedright\arraybackslash}m{9.5cm}}
  \hline
  \textbf{Noise model} & \textbf{Description} & \textbf{Kraus decomposition}\\
  \hline
  Amplitude damping &  Loss of energy to the environment & $K_0 = \begin{pmatrix}
  1 & 0 \\
  0 & \sqrt{1-\gamma}
\end{pmatrix},\,\, K_1 = \begin{pmatrix}
  0 & \sqrt{\gamma}\\
  0 & 0
\end{pmatrix}$
\\ 
\addlinespace
  Phase damping & Loss of quantum information & 
  $K_0=\sqrt{1-\gamma}\,\mathbb{I}, \,\, K_1 = \begin{pmatrix}
  \sqrt{\gamma} & 0\\
  0 & 0
\end{pmatrix},$  $ K_2 = \begin{pmatrix}
  0 & 0\\
  0 & \sqrt{\gamma}
\end{pmatrix}$
 \\
 \addlinespace
  Depolarizing & Decay to maximally mixed state & $K_0 =\sqrt{1-\gamma}\,\mathbb{I}, \, \, K_1 = \sqrt{\gamma}\,X$ 
  $K_2 = \sqrt{\gamma}\,Y,\,\,K_3 = \sqrt{\gamma}\,Z$\\ 
  \hline
\end{tabular}
\caption{The Kraus decomposition of the quantum channel describing respectively amplitude damping, phase damping, and depolarizing noise.}\label{noises.tab}
\end{table*}
 Our experiments aim to demonstrate the beneficial effect of intrinsic noise in a superconducting quantum computer for reservoir computing. First, to verify whether quantum noise improves the fading memory, we study its effect on the correlation among several measurement outcomes of the same qubit over time. Then, the performances of our computational architecture under the effect of quantum noise are analyzed by exploiting standard benchmarks for recurrent neural networks. We measure the short-term memory capacity of our network using the memory capacity metric, which quantifies the amount of variance in the delayed input that can be recovered from the trained output \cite{jaeger2001short}. To compute it, the system is required to replicate an input sequence $\left\{u_t\right\}_{t=0}^{T}$ delayed by a time interval $d$. Namely, the system is required to approximate the functional $\hat y_t = u_{t-d}$. Then, the $d$-memory capacity is defined as
\begin{equation}
\mathrm{MC}_d =  \frac{\mathrm{Cov}^2\left(y,\hat y\right)}{\mathrm{Var}(y)\mathrm{Var}(\hat y)}\,,
\end{equation}
where $y$ is the vector of the predicted values. The short-term memory capacity of the network is calculated as the sum up to a certain $d_{\text{max}}$,
\begin{equation}
\mathrm{MC} = \frac{1}{d_{\text{max}}} \sum_{d=1}^{d_{\text{max}}} \mathrm{MC}_d\,,
\end{equation}
normalized so that $\mathrm{MC} \in [0,1]$. We remark that $\mathrm{MC}$ values closer to 1 indicate a higher memory capacity of the system. We fix $d_{\text{max}} = 10$ in the numerical experiments. 
This definition of memory capacity should not be confused with the one introduced in \cite{Dambre2012}. The latter also considers the capacity of non-linear memory, whereas we consider only linear short-term memory. Our choice is driven by a more practical and task-specific score for benchmarking. In addition, the total capacity of \cite{Dambre2012} is upper-bounded by the number of internal states, whereas our memory capacity depends on $d_\text{max}$.\\
The reconstruction capabilities of our echo state network are tested with the so-called NARMA task. It is a nonlinear filter with past dependence, commonly used as a benchmark for the computational power of time-dependent learning systems \cite{narma}. Precisely, the NARMA-$p$ task is formally defined as 
\begin{equation}
    \hat y_{t} = 0.4\, y_{t-1} + 0.1 \, \left(\sum_{l=0}^{p-1}y_{t-l}\right) + \, u_{t-p+1} u_t + 0.1 \,.
\end{equation}
We pick $p = 2,5 \text{ and } 8$ in the paper. The accuracy of the reconstruction is quantified with the normalized mean-square error, expressed as
\begin{equation}
\mathrm{NMSE}(y,\hat{y})= \frac{\sum_{t=T_{\text{tr} + 1}}^{t=L}|\hat y_t - y_t|^2}{\sum_{t=T_{\text{tr} + 1}}^{t=L} \hat y_t ^2}
\end{equation}
where $y$ is the vector of the predicted values after training. The input time series $u_t$ is sampled from the continuous function $f(t) = 0.1 \sin{\left(\frac{2 \pi \alpha t}{T}\right)} \cdot \sin{\left(\frac{2 \pi \beta t}{T}\right)} \cdot \sin{\left(\frac{2 \pi \delta t}{T}\right)}$ with $(\alpha, \beta, \delta, T) =(2.11, 3.73, 4.11, 200)$.
\subsection*{Ideal and realistic quantum noise models}
The effect of noise in a quantum computer can be simulated by repeatedly applying a quantum channel to the density operator that describes the state of the qubit register. 
Precisely, in numerical simulations, a quantum channel $\mathcal{E}$, representing a certain noise model, is applied to each qubit after the action of each unitary gate $U$, as shown in Equation ~\ref{uno}. 
First, we investigate the effect of three typical ideal models of quantum noise. Namely, we test the effect of amplitude damping, phase damping, and depolarizing noises. 
We recall that a quantum channel can always be expressed in terms of its Kraus decomposition,
\begin{equation}
\mathcal{E}(\rho) = \sum_{i=1}^{r}K_i\rho K_i^\dag \, .
\end{equation} 
We refer to Table \ref{noises.tab} for a detailed description of the Kraus operators of the three noise models considered in this work. The noise intensity is tuned by a parameter appearing in the Kraus decomposition, which we indicate with $\gamma$ throughout the paper. Smaller $\gamma$ values correspond to a final quantum state less affected by noise. 
After the preliminary analysis of the effect of an ideal pure noise, we test our architecture with a more realistic model of quantum noise, which faithfully emulates noise in a superconducting device. To do this, we apply the quantum channels that model the noise of IBMQ backends freely available to the user. 
The simulation of a backend noise model implemented by Qiskit~\cite{qiskit} exploits a combination of thermal relaxation and depolarizing channels.
\subsection*{Experimental settings}
In this work, we fix $N=7$ as the number of qubits. In numerical simulations, we consider a total timespan of $L=200$, with a washout period of 10 time steps ($T_{\text{wo}} = 10$), while the test interval is 60 time steps long. Each circuit is run for $S=10^5$ shots for a precise reconstruction of the average value of observables. All simulations with an ideal noise model are run over 100 values of $\gamma$. Preliminary experiments are conducted using the Qiskit simulator \cite{qiskit}. In experiments involving quantum hardware, we significantly reduce the circuit depth according to the limitations of the cloud service employing the IBM quantum service and cost management. In particular, we fix $L=80$, and for each experiment, we submit 90 jobs in batch, with 5 circuits each and 30 shots per circuit, for a total of 13500 shots. We used the IBM\_BRISBANE quantum processing unit, which mounts an Eagle processor, equipped with 127 qubits. For the simulations, we used a 36-core Intel Xeon processor (4.3 GHz) with 128 GB of RAM.
\section*{Acknowledgements} 
F.M. and E.P. are supported by PRIN-PNRR PhysiComp (nr. grant G53D23006710001). L.N. is partially funded by Eni S.p.A.
\section*{Code availability}
The code is made available by the authors upon reasonable request.
\section*{Competing interests} Authors declares no competing interests.
\bibliography{bib}
\bibliographystyle{naturemag}
\clearpage
\appendix
\onecolumngrid
\section{Supplementary Material}
 \section{Quantum reservoir computing and fading memory}\label{qrc.fm}
Denoting $\mathcal{S}(\mathcal{H})$ the set of density operators on the Hilbert space $\mathcal{H}$ that describes some quantum system, we recall that is possible to describe a quantum reservoir computer with the equations
\begin{align}\label{rc}
    \begin{cases}
        \rho_{t+1} = \mathcal T(\rho_t, u_{t+1})\\
        y_{t+1} = h(\rho_{t+1})
    \end{cases}
\end{align}
where $\mathcal I \ni u = \{\dots,u_{-1}, u_0\}$ is the input signal and $\mathcal T:\mathcal{S}(\mathcal{H}) \rightarrow \mathcal{S}(\mathcal{H})$ is a quantum channel that describes the dynamic of the reservoir. The key property of a reservoir computer is the well-known echo state property (or, convergence property) \cite{jaeger.first.long}, which ensures that for each input sequence $u$, there exists a unique reservoir signal $z$. If the echo state property holds, the reservoir computer defines a unique, causal operator $\mathcal{C}\colon \mathcal{I}\rightarrow \mathcal{O}$ defined by
\begin{equation}\label{func}
   \mathcal C(u)_t = \mathcal{C}\left(u_{|_t}\right) = h\left(\mathcal{T}(\rho_{t-1},u_t)\right) 
\end{equation}
where $u_{|_t}$ is the sequence $u$ truncated at time $t$. As a consequence, it defines a unique functional on the set of input sequences as $C(u) = \mathcal{C}(u)_0$. Note that the choice of $t=0$ as the final time step is merely irrelevant since the system is time-invariant. Fading memory is formally defined as the continuity of this functional, and thus of related filter since the relation is bijective, with respect to a proper topology.
\begin{definition}[Fading memory]\label{fm.def}
        Let $\omega\colon \mathbb{Z}^- \rightarrow(0,1]$ be an increasing function with zero limit at infinity and $\omega(0)=1$ and define the distance
        \begin{align}
        d_\omega(u,v) =   \sup_{t\in \mathbb{Z}^-}\norm{u_t-v_t}\omega_{t}, \quad \forall u,v \in \mathcal{I} \,.
        \end{align}
    A reservoir computer has fading memory if the associated functional $C: \mathcal{I}\rightarrow\mathbb{R}$ is continuous in the topology of the metric space $\left(\mathcal{I}, d_\omega\right)$.
    \end{definition}
    Fading memory of a quantum reservoir ultimately depends on the properties of the quantum channel $\mathcal T$ that defines the quantum reservoir computer. In particular, there is the following important theorem, that ensures fading memory.
    \begin{theorem}[Theorem 3 in \cite{grig2018}]
        If the quantum channel $\mathcal T$ is strictly contractive in the trace norm $\norm{A}_1 = \mathrm{tr}\left(\sqrt{A^\dag A}\right)$, namely
        \begin{equation}
            \norm{\mathcal T(\rho)-\mathcal T(\sigma)}_1 \le r \norm{\rho-\sigma}_1
        \end{equation}
        for some $r\in (0,1)$, then the associated reservoir computer has the echo state property and fading memory. 
    \end{theorem}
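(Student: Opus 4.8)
The plan is to establish the two conclusions --- the echo state property and fading memory --- in turn, reading the contractivity hypothesis as the statement that for every fixed input value the map $\rho \mapsto \mathcal{T}(\rho,u)$ contracts the trace norm by a factor $r$, uniformly in $u$. First I would prove the echo state property. Fix a left-infinite input $u = (\dots,u_{-1},u_0)$ and, for each starting time $-m$, let $\rho_0^{(m)}$ be the state reached at time $0$ by driving an arbitrary fixed initial density operator $\rho_\star$ forward with the inputs $u_{-m+1},\dots,u_0$. Comparing $\rho_0^{(m)}$ and $\rho_0^{(m+1)}$, both evolve from time $-m$ to $0$ under identical inputs but from two distinct valid density operators, so iterating the contraction $m$ times gives
\begin{equation}
    \norm{\rho_0^{(m)} - \rho_0^{(m+1)}}_1 \le r^m \norm{\rho_\star - \mathcal{T}(\rho_\star,u_{-m})}_1 \le r^m D,
\end{equation}
where $D \le 2$ bounds the trace-norm diameter of the compact set of density operators. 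Since $\sum_m r^m < \infty$, the sequence $\{\rho_0^{(m)}\}$ is Cauchy in the complete metric space $\mathcal{S}(\mathcal{H})$ and converges to a limit $\rho_0^\star(u)$ independent of $\rho_\star$; the same estimate forces any two bi-infinite solutions to coincide, which is precisely the echo state property.

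For fading memory I would compare the echo states of two inputs $u,v$. Writing $\rho_t = \rho_t^\star(u)$ and $\sigma_t = \sigma_t^\star(v)$ and inserting the intermediate term $\mathcal{T}(\sigma_t,u_{t+1})$, the triangle inequality together with contractivity in the state and Lipschitz continuity of the encoding in the input --- which holds because $u$ enters only through smooth rotation angles, so that $\norm{\mathcal{T}(\sigma,u) - \mathcal{T}(\sigma,v)}_1 \le L\,|u-v|$ --- yields the recursion
\begin{equation}
    \norm{\rho_{t+1} - \sigma_{t+1}}_1 \le r\,\norm{\rho_t - \sigma_t}_1 + L\,|u_{t+1} - v_{t+1}|.
\end{equation}
Because the echo states carry no dependence on an initial condition, I can unroll this all the way back to $-\infty$ to obtain
\begin{equation}
    \norm{\rho_0 - \sigma_0}_1 \le L \sum_{k=0}^{\infty} r^k\, |u_{-k} - v_{-k}|.
\end{equation}

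The final step is to match this bound against the weighted metric $d_\omega$. Choosing the admissible weight $\omega_{-k} = \lambda^k$ for any $\lambda \in (r,1)$ --- increasing toward $\omega_0 = 1$ and vanishing at $-\infty$ as required by Definition~\ref{fm.def} --- gives $|u_{-k} - v_{-k}| \le \lambda^{-k} d_\omega(u,v)$ and hence $\norm{\rho_0 - \sigma_0}_1 \le L\,d_\omega(u,v)\sum_k (r/\lambda)^k = \tfrac{L\lambda}{\lambda - r}\,d_\omega(u,v)$. Composing with the continuous readout $h$ then shows that $C(u) = h(\rho_0^\star(u))$ is Lipschitz continuous on $(\mathcal{I}, d_\omega)$, i.e. the reservoir has fading memory. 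I expect the main obstacle to be the bookkeeping at the two-sided infinite horizon --- justifying that the unrolled recursion converges and that its limit genuinely realizes the echo state rather than a truncation artifact --- together with making explicit the input-Lipschitz property, since the bare contractivity hypothesis controls only the state argument. The compatibility condition $\lambda > r$ between the weighting and the contraction rate is ultimately what ties the two parts together.
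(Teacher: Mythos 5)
The paper does not prove this statement itself --- it is imported verbatim as Theorem~3 of \cite{grig2018} --- so there is no internal proof to compare against; your argument is the standard contraction-mapping proof of that cited result and it is correct. The echo state part (comparing the time-$0$ states obtained from progressively earlier initializations, getting the telescoping bound $r^m D$, and invoking completeness of $\mathcal{S}(\mathcal{H})$) and the fading-memory part (the recursion $\norm{\rho_{t+1}-\sigma_{t+1}}_1 \le r\norm{\rho_t-\sigma_t}_1 + L\abs{u_{t+1}-v_{t+1}}$, unrolled into a geometric series and matched against the weight $\omega_{-k}=\lambda^k$ with $\lambda\in(r,1)$) are exactly the mechanism behind the original theorem. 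The one substantive point, which you correctly flag yourself, is that the fading-memory half needs continuity (here, Lipschitz continuity) of $\mathcal{T}(\sigma,\cdot)$ in the input argument, which is \emph{not} contained in the bare contractivity hypothesis as stated; in this paper's setting it holds because the input enters only through rotation angles of $U(u)$, but it is an additional hypothesis of the cited theorem rather than a consequence of contractivity, and your proof would be incomplete without making it explicit. A minor cosmetic remark: for mere fading memory (continuity rather than Lipschitz continuity of the induced functional) uniform continuity of the encoding on the compact input space would already suffice, so your Lipschitz assumption is slightly stronger than strictly necessary.
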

We remark that any quantum channel is non-expansive in trace norm, namely 
\begin{equation}
\norm{\mathcal T(\rho)}_1 \le \norm{\rho}_1\,.
\end{equation}
\section{A geometrical interpretation of non-unital channels}\label{unital.sub}
Quantum channels that preserve identity, namely $\mathcal{T}(\mathbb{I}) = \mathbb{I}$, are said to be unital. Unital channels are not suitable for reservoir computing, as shown by the following theorem. Here $\mathcal{B}(\mathcal{H})$ is the set of all bounded operators and $d$ is the dimension of the Hilbert space.
\begin{theorem}[Theorem 5 in \cite{qrc}]
    Assume there exists $\epsilon>0$ and an operator norm $\norm{\cdot}$ such that the channel $\mathcal{T}(\cdot,u)\colon \mathcal{B}(\mathcal{H})\rightarrow \mathcal{B}(\mathcal{H})$ satisfies $\norm{\mathcal T(\cdot,u)}\le 1 - \epsilon$ for any $u\in\mathcal I$. Then, the correspondent functional $C$ defined the reservoir computer in Eq.\ref{rc} is constant with $C(u) = h(\mathbb{I}/d)$ if and only if $\mathcal T$ is unital for any input, namely $\mathcal{T}(\mathbb{I}, u) = \mathbb{I} \quad \forall u \in \mathcal{I}$.
\end{theorem}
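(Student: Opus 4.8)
The plan is to prove both implications through the Banach fixed point theorem, exploiting the uniform strict contractivity hypothesis $\norm{\mathcal{T}(\cdot,u)}\le 1-\epsilon$. Since a quantum channel is linear on $\mathcal{B}(\mathcal{H})$, I would apply the bound to the \emph{difference} of two density operators, $\norm{\mathcal{T}(\rho,u)-\mathcal{T}(\sigma,u)} = \norm{\mathcal{T}(\rho-\sigma,u)} \le (1-\epsilon)\norm{\rho-\sigma}$, so that each input map $\mathcal{T}(\cdot,u)$ is a genuine strict contraction of the (complete, affine) set of density operators, the bound being read on the traceless sector where no conflict with trace preservation arises. This is exactly the setting in which the echo state property holds: iterating over the left-infinite input history drives any initialization to a unique reservoir state $\rho_\infty(u)$, and by construction $C(u)=h(\rho_\infty(u))$. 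The whole argument then reduces to characterizing when $\rho_\infty(u)=\mathbb{I}/d$ for every input sequence.

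For the direction ``unital $\Rightarrow$ constant'', I would first observe that, because each $\mathcal{T}(\cdot,u)$ is trace preserving, unitality $\mathcal{T}(\mathbb{I},u)=\mathbb{I}$ is equivalent by linearity to the maximally mixed state $\mathbb{I}/d$ being a fixed point. By uniqueness of the fixed point of a strict contraction, $\mathbb{I}/d$ is then \emph{the} fixed point shared by all input maps. Telescoping the contraction along the history while subtracting this common fixed point,
\begin{equation}
\norm{\rho_0 - \mathbb{I}/d} = \norm{\mathcal{T}(\rho_{-1},u_0) - \mathcal{T}(\mathbb{I}/d,u_0)} \le (1-\epsilon)\norm{\rho_{-1}-\mathbb{I}/d} \le \dots \le (1-\epsilon)^{n}\norm{\rho_{-n}-\mathbb{I}/d},
\end{equation}
and letting $n\to\infty$ with the right-hand side bounded (density operators form a bounded set) forces $\rho_\infty(u)=\mathbb{I}/d$ regardless of $u$. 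Hence $C(u)=h(\mathbb{I}/d)$ is constant.

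For the converse ``constant $\Rightarrow$ unital'', I would argue by reduction to fixed points together with time invariance. Feeding the constant sequence $u\equiv a$ makes the reservoir converge to $\rho_a^{*}$, the unique fixed point of $\mathcal{T}(\cdot,a)$, so the hypothesis gives $h(\rho_a^{*})=h(\mathbb{I}/d)$ for every $a$. Once I can conclude $\rho_\infty(u)=\mathbb{I}/d$ for \emph{all} histories, I would choose a history that places $\rho_{-1}=\mathbb{I}/d$ followed by an arbitrary final input $a$, yielding $\mathbb{I}/d=\rho_0=\mathcal{T}(\mathbb{I}/d,a)$, i.e. $\mathcal{T}(\mathbb{I},a)=\mathbb{I}$, which is the desired unitality for every input.

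The main obstacle is the readout $h$ in this converse direction: the hypothesis only supplies equalities \emph{after} applying $h$, namely $h(\rho_a^{*})=h(\mathbb{I}/d)$, whereas unitality is a statement about the operators themselves. The crux is therefore to promote $h(\rho_\infty(u))=h(\mathbb{I}/d)$ to the operator identity $\rho_\infty(u)=\mathbb{I}/d$, that is, to exclude a non-unital channel whose displaced fixed point $\rho_a^{*}\neq\mathbb{I}/d$ is nonetheless indistinguishable under $h$. I expect to close this gap by working at the level of the reservoir functional, where $h$ acts faithfully on the attainable states, precisely as in the formulation of Ref.~\cite{qrc}; with that faithfulness in hand, the fixed-point/time-invariance step above completes the proof.
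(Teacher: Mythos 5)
First, note that the paper never proves this statement: it is imported verbatim as Theorem~5 of Ref.~\cite{qrc} and used as a black box, so there is no in-paper proof to compare against and your attempt must be judged on its own. Your forward direction (unital $\Rightarrow$ constant) is correct and is the standard argument: trace preservation plus unitality make $\mathbb{I}/d$ a common fixed point of every $\mathcal{T}(\cdot,u)$, uniform strict contractivity on differences gives uniqueness of the solution and the telescoping estimate, and boundedness of the state space kills the right-hand side as $n\to\infty$. Your care in reading the bound $\norm{\mathcal{T}(\cdot,u)}\le 1-\epsilon$ on differences of states (the traceless sector) is also the right move, since a trace-preserving positive map cannot literally have induced trace norm below $1$ on all of $\mathcal{B}(\mathcal{H})$.

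The converse is where the genuine gap sits, and you have named it yourself but not closed it. As the theorem is transcribed here --- with the constancy hypothesis placed on the post-readout functional $C(u)=h(\rho_\infty(u))$ --- the implication is simply false for an arbitrary readout $h$: take $h$ constant (or any $h$ that fails to separate $\rho_a^{*}$ from $\mathbb{I}/d$), pair it with a non-unital strictly contractive channel, and $C$ is constant while $\mathcal{T}$ is not unital. So no argument can promote $h(\rho_a^{*})=h(\mathbb{I}/d)$ to the operator identity $\rho_a^{*}=\mathbb{I}/d$ without a faithfulness assumption on $h$ that is absent from the statement as written. The resolution is the one you gesture at: Theorem~5 of Ref.~\cite{qrc} is formulated at the level of the reservoir filter itself --- the unique solution $t\mapsto\rho_t$ being identically $\mathbb{I}/d$ --- rather than of the scalar functional obtained after readout. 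At that level your constant-input fixed-point step does close the converse, since $\rho_a^{*}=\mathbb{I}/d$ immediately yields $\mathcal{T}(\mathbb{I}/d,a)=\mathbb{I}/d$, i.e.\ unitality for every $a$, and your final ``history through $\mathbb{I}/d$'' step is then redundant. In short: your strategy is the right one, but the last step of the converse cannot be completed for the statement in the form quoted in this paper, only for the original filter-level formulation.
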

The action of unital channels has a geometrical interpretation in the Bloch sphere representation of qubit state, which allows distinguishing between unitality and non-unitality of a given channel, even without knowing its analytical expression. The set of pure states in the image of a quantum channel $\mathcal{E}$  is called its pure output $\mathrm{PO}\left(\mathcal{E}\right)$, namely
\begin{equation}
\mathrm{PO}\left(\mathcal{E}\right) = \left\{\mathcal{E}(\rho), \forall \rho \in \mathcal{S}(\mathcal{H})\right\} \cap \mathcal{P}\,.
\end{equation}
In the Bloch sphere, the pure output is the intersection between the spherical surface of the pure states and the ellipsoid representing the image of the quantum channel. Unital channels fulfill a fundamental symmetry property.
\begin{theorem}[Lemma 4.3 in \cite{usqqc}]\label{simmetry}
    The pure output of an unital channel $\mathcal{T}$ is centrally symmetric.
\end{theorem}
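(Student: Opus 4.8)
The plan is to pass to the Bloch-sphere representation, in which central symmetry becomes a transparent geometric statement, and then reduce everything to the fact that a linear image of a centrally symmetric body, intersected with the centrally symmetric sphere of pure states, is again centrally symmetric. First I would write an arbitrary qubit state as $\rho = \tfrac{1}{2}(\mathbb{I} + \bm{r}\cdot\bm{\sigma})$, so that the state space is the unit Bloch ball $\{\bm{r}\in\mathbb{R}^3 : \lvert\bm{r}\rvert\le 1\}$, the set of pure states $\mathcal{P}$ is exactly the unit sphere $\lvert\bm{r}\rvert = 1$, and the maximally mixed state $\mathbb{I}/2$ sits at the origin $\bm{r}=0$. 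Every qubit channel then acts on the Bloch vector as an affine map $\bm{r}\mapsto M\bm{r}+\bm{c}$ for some real $3\times 3$ matrix $M$ and translation $\bm{c}\in\mathbb{R}^3$, so that the image of the Bloch ball is an ellipsoid (possibly degenerate) centered at $\bm{c}$.

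The key step is to identify the geometric meaning of unitality. Because $\mathbb{I}/2$ corresponds to $\bm{r}=0$, the defining condition $\mathcal{T}(\mathbb{I}/2)=\mathbb{I}/2$ forces $M\cdot 0 + \bm{c} = 0$, i.e.\ $\bm{c}=0$. Hence for a unital channel the affine map is purely linear, $\bm{r}\mapsto M\bm{r}$, and its image is an ellipsoid centered at the origin. Such a set is invariant under the point inversion $\bm{v}\mapsto -\bm{v}$: if $\bm{q}=M\bm{r}$ belongs to the image with $\lvert\bm{r}\rvert\le 1$, then $-\bm{q}=M(-\bm{r})$ also belongs to it, since $\lvert-\bm{r}\rvert=\lvert\bm{r}\rvert\le 1$.

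It then remains to pass from the full image to the pure output $\mathrm{PO}(\mathcal{T})$, which is the intersection of this image with $\mathcal{P}$. Both of these sets are stable under $\bm{v}\mapsto -\bm{v}$: the image by the argument just given, and $\mathcal{P}$ trivially because $\lvert\bm{p}\rvert=1$ implies $\lvert-\bm{p}\rvert=1$. Since the intersection of two sets each invariant under the same inversion is itself invariant under that inversion, $\bm{p}\in\mathrm{PO}(\mathcal{T})$ implies $-\bm{p}\in\mathrm{PO}(\mathcal{T})$, which is precisely central symmetry about the origin, equivalently about the image of the maximally mixed state.

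I expect no genuine obstacle in this argument. The only point requiring a little care is justifying the affine-map representation of a qubit channel and confirming that unitality is exactly the vanishing of the translation $\bm{c}$; a secondary remark is that a degenerate image (a disk, segment, or point) is still centrally symmetric about its center, so the conclusion is unaffected if $M$ is singular. Everything else follows immediately from the centrally symmetric structure of the linear image and of the sphere of pure states.
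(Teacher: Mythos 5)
Your proof is correct: the reduction of unitality to the vanishing of the translation part of the affine Bloch-sphere action, followed by the observation that the intersection of two sets invariant under the inversion $\bm{v}\mapsto-\bm{v}$ is itself invariant, is a complete and sound argument (the affine form $\bm{r}\mapsto M\bm{r}+\bm{c}$ follows from linearity and trace preservation alone, and your remark about degenerate images covers the singular-$M$ case). The paper does not actually prove this statement — it imports it as Lemma 4.3 of the cited reference — but its surrounding discussion invokes exactly the same geometric picture (image ellipsoid centered at the origin iff the channel is unital, pure output as the intersection of that ellipsoid with the sphere), so your argument is essentially the intended one, now made self-contained.
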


\begin{table}[t]
   \centering
  \begin{tabular}{>{\raggedright\arraybackslash}m{2.4cm} 
                  >{\raggedright\arraybackslash}m{3.1cm} 
                  >{\raggedright\arraybackslash}m{9.5cm}}
  \hline
  \textbf{Noise model} & \textbf{Description} & \textbf{Kraus decomposition}\\
  \hline
  Amplitude damping &  Loss of energy to the environment & $K_0 = \begin{pmatrix}
  1 & 0 \\
  0 & \sqrt{1-\gamma}
\end{pmatrix},\,\, K_1 = \begin{pmatrix}
  0 & \sqrt{\gamma}\\
  0 & 0
\end{pmatrix}$
\\ 
\addlinespace
  Phase damping & Loss of quantum information & 
  $K_0=\sqrt{1-\gamma}\,\mathbb{I}, \,\, K_1 = \begin{pmatrix}
  \sqrt{\gamma} & 0\\
  0 & 0
\end{pmatrix},$  $ K_2 = \begin{pmatrix}
  0 & 0\\
  0 & \sqrt{\gamma}
\end{pmatrix}$
 \\
 \addlinespace
  Depolarizing & Decay to maximally mixed state & $K_0 =\sqrt{1-\gamma}\,\mathbb{I}, \, \, K_1 = \sqrt{\gamma}\,X$ 
  $K_2 = \sqrt{\gamma}\,Y,\,\,K_3 = \sqrt{\gamma}\,Z$\\ 
  \hline
\end{tabular}
\caption{The Kraus decomposition of the quantum channel describing respectively amplitude damping, phase damping, and depolarizing noise.}\label{noises.tab}
\end{table}

Hence, by looking at the pure output of a quantum channel we are able to determine whether it is unital or not~\cite{usqqc}. In particular, the channel is unital if and only if the center of the image ellipsoid coincides with the center of the Bloch sphere. This fact has a trivial interpretation since the center of the Bloch sphere corresponds to the maximally mixed state $\rho=\mathbb{I}/2$. Precisely, we recall the following theorem.
\begin{theorem}[Theorem 4.9 in \cite{usqqc}]\label{teo.label}
    Let $\mathcal{E}$ be a qubit channel and $\mathcal P$ the set of pure states. One of the following holds:
    \begin{enumerate}
       \item $\mathrm{PO}(\mathcal{E})=\emptyset$, the channel has no pure output;
       \item $\mathrm{PO}(\mathcal{E})=\{\xi\}$, the channel has a unique pure output state $\xi$;
       \item $\mathrm{PO}(\mathcal{E})=\{\xi,\zeta\}$, the channel has exactly two pure outputs $\xi,\zeta$;
       \item $\mathrm{PO}(\mathcal{E})=\mathcal{P}$, all pure states are in the image of $\mathcal{E}$.
   \end{enumerate}
\end{theorem}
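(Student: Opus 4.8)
The plan is to pass entirely to the Bloch-ball picture, where a qubit channel $\mathcal{E}$ acts on states $\rho = \tfrac{1}{2}(\mathbb{I} + \vec{r}\cdot\vec{\sigma})$ as an affine map $\vec{r} \mapsto M\vec{r} + \vec{t}$ with $M$ a real $3\times 3$ matrix and $\vec{t}\in\mathbb{R}^3$. Under this identification the state space becomes the closed unit ball $\bar{B}$, the pure states $\mathcal{P}$ become its boundary sphere $S=\partial\bar{B}$, and the image of $\mathcal{E}$ is the solid ellipsoid $E=\{M\vec{r}+\vec{t}:|\vec{r}|\le 1\}$. Complete positivity forces $E\subseteq\bar{B}$, so the pure output is precisely $\mathrm{PO}(\mathcal{E})=E\cap S$, the set of points of the image ellipsoid sitting on the sphere.

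First I would note that, because $E\subseteq\bar{B}$, a point of $E$ lies on $S$ if and only if it maximizes $|\vec{x}|^2$ over $E$ and this maximum equals $1$; hence $\mathrm{PO}(\mathcal{E})$ is either empty (maximum $<1$) or equals the set of farthest points of the ellipsoid from the origin. This recasts the theorem as a classification of the global maximizers of the strictly convex function $\vec{x}\mapsto|\vec{x}|^2$ on a solid ellipsoid. Using a singular-value decomposition $M=O_1 D O_2$ and absorbing the orthogonal factors into Bloch-sphere rotations (unitary pre- and post-processing, which act as proper rotations preserving $S$ and the cardinality of $\mathrm{PO}$), I would reduce to $D$ diagonal and analyze the maximizers through a Lagrange multiplier, $\vec{x}=\nu\,A(\vec{x}-\vec{c})$, with $A$ the diagonal shape matrix and $\vec{c}$ the center. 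Solving componentwise gives $x_i=\nu a_i c_i/(\nu a_i-1)$, and imposing $|\vec{x}|=1$ yields a single scalar secular equation in $\nu$ whose admissible roots I would count.

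This produces the geometric skeleton of the statement: when the ellipsoid carries no symmetry aligned with the origin the farthest point is unique (one output); an antipodal symmetry, i.e.\ a centered ellipsoid with a unique longest semi-axis, yields exactly two; the full sphere $\mathcal{P}$ appears only in the extreme degeneracy $E=\bar{B}$, which by affine rigidity corresponds to the unitary channels. The genuinely dangerous configuration is a rotational symmetry about an axis through the origin, which threatens an entire circle of equidistant farthest points and would give a fifth, forbidden case.

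The hard part is therefore excluding the circle, since containment alone does not: an oblate spheroid inscribed in $\bar{B}$ and tangent along an equatorial circle is perfectly admissible as a convex body. Here I would bring in complete positivity in canonical form, namely the Fujiwara--Algoet / Ruskai--Szarek--Werner inequalities tying the singular values of $M$ to $\vec{t}$. In the centered case these read $|\lambda_i\pm\lambda_j|\le|1\pm\lambda_k|$ and force two semi-axes equal to $1$ to drag the third up to $1$ as well, collapsing the would-be circle directly into the unitary case $E=\bar{B}$; the displaced, rotationally symmetric spheroids are eliminated by the same inequalities once $\vec{t}$ is accounted for. Showing that no admissible pair $(M,\vec{t})$ can produce a continuum of farthest points short of $E=\bar{B}$ is the crux of the whole argument. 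Once the circle is ruled out, the secular-equation count leaves exactly $\emptyset$, one point, two points, or all of $\mathcal{P}$, which is the stated dichotomy.
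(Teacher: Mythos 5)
First, a point of context: the paper does not prove this statement at all --- it is quoted verbatim as Theorem 4.9 of the external reference \cite{usqqc}, so there is no in-paper proof to compare your argument against. Judged on its own terms, your strategy is the standard and correct route to this classification: pass to the Bloch-ball affine representation $\vec r\mapsto M\vec r+\vec t$, identify $\mathrm{PO}(\mathcal{E})$ with the tangency locus of the image ellipsoid inscribed in the unit ball (equivalently, the set of farthest points of the ellipsoid from the origin when that maximal distance equals $1$), reduce $M$ to canonical form by pre- and post-rotations, and invoke the Fujiwara--Algoet / Ruskai--Szarek--Werner complete-positivity constraints to kill the one configuration that pure geometry cannot exclude.

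The genuine gap is that the decisive step is asserted rather than proved. You correctly isolate the dangerous case --- an inscribed spheroid tangent to the sphere along a circle --- and correctly note that containment alone does not rule it out; but for the displaced (non-unital) spheroids you only claim that the RSW inequalities do the job "once $\vec t$ is accounted for," which is exactly the part you yourself call the crux. The claim is true and the computation is short, so you should include it: with $M=\mathrm{diag}(a,a,c)$, $\vec t=(0,0,d)$ and $0<c<a\le 1$, tangency along a circle at height $h$ forces $d=h\,(a^2-c^2)/a^2$ and $h^2=a^2(1-a^2)/(a^2-c^2)$, and a nondegenerate circle ($h^2<1$) requires $c<a^2$; on the other hand the RSW condition $(\lambda_1+\lambda_2)^2\le(1+\lambda_3)^2-t_3^2$ reduces, after substituting $d$, to $4a^4\le(a^2+c)^2$, i.e. $c\ge a^2$. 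These are incompatible, and the boundary case $c=a^2$ collapses the circle to a single point at the pole --- precisely the amplitude-damping channel, which is why that channel sits on the extreme boundary of the CP region. Two smaller loose ends: your Lagrange/secular-equation setup tacitly assumes $M$ invertible, so the degenerate images (disks, segments, points, e.g. $M=\mathrm{diag}(1,0,0)$ giving exactly two pure outputs) need a separate sentence; and the exclusion of three or more isolated tangency points for a triaxial ellipsoid should be argued explicitly (three equidistant global maximizers would force a rotation symmetry of the pair (ellipsoid, origin) of order at least three, which makes the ellipsoid a spheroid about an axis through the origin and returns you to the circle case already excluded).
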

In particular, combining Theorem~\ref{simmetry} and Theorem~\ref{teo.label}, we can deduce that a channel with exactly two pure outputs is unital if and only if they are antipodal on the Bloch sphere (thus, are orthogonal in the Hilbert space). Fig. 5 in the Main text shows the action of amplitude damping, depolarizing, phase damping, and fake-OSAKA channel. As expected, amplitude damping is the sole non-unital ideal noise channel. Moreover, we can observe that the fake-OSAKA channel is non-unital.

\section{Contractivity of noise channels}\label{ad.app}
In this Appendix, we prove that a noise channel $\mathcal D:\mathcal{S}(\mathcal{H})\rightarrow \mathcal{S}(\mathcal{H})$ ensures fading memory by proving that it is strictly convergent in the trace norm. Namely, we show that there exists $r<1$ such that $\norm{D(\rho)- D(\sigma)}_1 \le r \norm{\rho-\sigma}_1$. For clarity, we write the proof for the single-qubit Hilbert space $\mathcal{H}=\mathbb{C}^2$. The extension to the multidimensional case is trivial. Now we propose some common noise channels and evaluate whether they guarantee fading memory or not. In order to do so, it is helpful to Kraus decompose each channel \cite{Nielsen_Chuang_2010}, hence write each channel in the form
\begin{align}
    \mathcal D(\rho) = \sum_i K_i\rho K^\dag_i.
\end{align}
For the following calculations, it is useful to keep in mind that for a generic Kraus operator $K$ we have
\begin{equation}\label{kraus_tracenorm}
    \begin{split}
        \norm{K \rho K^\dag}_1 = K_{ij}\rho_{jl}K^\dag_{li} = k_i\delta_{ij}\rho_{jl}k_l\delta_{li} = \\= k_i\rho_{ii}k_i = k_i^2\rho_{ii}
    \end{split}
\end{equation}
where summation is implied over repeated indices. Remarkably the set of strictly contractive quantum channels is dense in the set of all quantum channels \cite{strict-contr}. That is, in the assumption of finite resolution in the observations, we can assume that any channel modeling noise is strictly contractive, thus ensuring fading memory to the system.
\paragraph*{\textbf{Bit flip, bit-phase flip and phase flip.}}
The Kraus decomposition of the bit flip map is
\begin{equation}
    \mathcal{D}_\text{bf}(\rho) = \sqrt{1-\gamma}\rho + \sqrt{\gamma} X\rho X.
\end{equation}
Notice that the bit-phase flip and the phase flip maps have the same form of the bit flip, but instead of the $X$ Pauli operator, they involve the $Y$ and $Z$ operators, respectively. Hence we have
\begin{align}
    \mathcal{D}_\text{bpf}(\rho) &= \sqrt{1-\gamma}\rho + \sqrt{\gamma} Y\rho Y.\\
    \mathcal{D}_\text{pf}(\rho) &= \sqrt{1-\gamma}\rho + \sqrt{\gamma} Z\rho Z.
\end{align}
Then, for the sake of conciseness, we only show the case of the bit flip, the other cases being a trivial extension. Since Pauli operators are traceless operators, we can identify
\begin{equation}
    K_0 = \sqrt{1-\gamma}\,\mathbb{I} \qq{and} K_1 = \sqrt{\gamma}\,X
\end{equation}
By exploiting Equation~\ref{kraus_tracenorm},  we have trivially
\begin{equation}
    \norm{K_0 (\rho - \sigma) K_0^\dag}_1  \le 2\left(1 - \gamma\right)|\rho_{00} - \sigma_{00}|
\end{equation}
and
\begin{equation}
    \norm{K_1 (\rho - \sigma) K_1^\dag}_1 = 0.
\end{equation}
Thus, such channels are contractive in trace norm.

\paragraph*{\textbf{Amplitude damping.}}
The quantum operation that describes energy dissipation is the amplitude damping quantum channel.
Recalling the Kraus decomposition of the amplitude damping channel, we have
\begin{align}
    \mathcal D_\text{ad}(\rho) = K_0 \rho K_0^\dag + K_1 \rho K_1^\dag
\end{align}
with $K_0 = \begin{pmatrix}
    1 & 0  \\
    0 & \sqrt{1-\gamma}
\end{pmatrix}$ and $K_1 = \begin{pmatrix}
    0 & \sqrt{\gamma} \\
    0 & 0
\end{pmatrix}$.
We consider the two terms separately, exploiting Equation~\ref{kraus_tracenorm}.
Thus, since $\mathrm{tr}(\rho) = \mathrm{tr}(\sigma) = 1$, for the first Kraus operator we conclude that
\begin{align}
     \norm{K_0 (\rho - \sigma) K_0^\dag}_1  = |\rho_{00} - \sigma_{00}|\left(1 + \sqrt{1-\gamma}\right)\le \\ \le r \norm{\rho-\sigma} = 2|\rho_{00} - \sigma_{00}|
\end{align} 
with $r=1-\frac{\gamma}{2} < 1$. Similarly, for the other term, we get
\begin{align}
     \norm{K_1 (\rho - \sigma) K_1^\dag}_1 \le r \norm{\rho-\sigma}_1
\end{align} 
for $r = \gamma/2$. We conclude that $D$ is strictly contractive, since
\begin{align}
\norm{\mathcal D_\text{ad}(\rho)-\mathcal{D}_\text{ad}(\sigma)}_1 
\le \norm{K_0(\rho-\sigma)K_0^\dag}_1 + \\+\norm{K_1(\rho-\sigma)K_1^\dag}_1 \le r \norm{\rho - \sigma}_1
\end{align}
with $r = \max\left(1-\frac{\gamma}{2},\, \gamma/2\right) < 1$.

\paragraph*{\textbf{Depolarizing channel.}}
Consider a qubit that has probability $p$ of being depolarized.
The action of the depolarizing channel is represented by
\begin{equation}
\mathcal{D}_\text{de}(\rho)= \frac{p}{2}\mathbb{I} + (1-p)\rho.
\end{equation}
In the operator-sum representation, we can write
\begin{equation}
    \left(1-\frac{3p}{4}\right)\rho +\frac{p}{4}\left(X\rho X + Y\rho Y + Z\rho Z\right)
\end{equation}
or, equivalently, it is more commonly written as
\begin{equation}
    \left(1-p\right)\rho +\frac{p}{3}\left(X\rho X + Y\rho Y + Z\rho Z\right)
\end{equation}
which means that the state $\rho$ is left alone with probability $1-p$, and each Pauli operator acts with probability $p/3$.

\paragraph*{\textbf{Phase damping.}} 
The loss of information, without loss of energy, is described by the phase damping.
We observe that the phase damping channel can be written as
\begin{equation}
\mathcal{D}_{pd}(\rho) = \widetilde{K}_0 \rho \widetilde{K}_0 + \widetilde{K}_1 \rho \widetilde{K}_1
\end{equation}
with  
\begin{equation}
    \begin{split}
        \widetilde{K}_0 &= \mqty(1 & 0\\ 0& \sqrt{1-\lambda})\\
        \widetilde{K}_1 &= \mqty(0 & 0\\ 0 & \sqrt{\lambda})\\
    \end{split}
\end{equation}
where the parameter $\lambda$ can be interpreted as the probability of having a photon scattered (without loss of energy) as it travels through a waveguide.\\
An important thing to notice is that by a unitary recombination of $\widetilde{K}_0$ and $\widetilde{K}_1$, we can write other two equivalent Kraus operators for the phase damping channel: define $\gamma = \left(1-\sqrt{1-\lambda}\right)/2$, then
\begin{equation}
    \begin{split}
        K_0 &= \sqrt{1-\gamma}\,\mathbb{I}\\
        K_1 &= \sqrt{\gamma}\,X\\
    \end{split}.
\end{equation}
Thus, phase damping describes the same operation as the phase flip. Hence it inherits the same contraction and unitality properties.

\paragraph*{\textbf{Thermal relaxation.}}
Thermal relaxation is equivalent to phase-amplitude damping. The problem is that the Qiskit implementation of the thermal relaxation channel differs from the standard one, hence a short comment is needed.
Thermal relaxation, in the Qiskit version, is modeled by the following Kraus decomposition in Qiskit~\cite{qiskit}:
\begin{equation}
    \begin{split}
        K_0 &= \sqrt{1-p_z-p_{\rm r0}-p_{\rm r1}}\,\mathbb{I}\\
        K_1 &= \sqrt{p_z}\, Z\\
        K_2 &= \sqrt{p_{\rm r0}}\, \mqty(1 &0\\ 0 &0 )\\
        K_3 &= \sqrt{p_{\rm r1}}\, \mqty(0 &0\\ 0 &1 )
    \end{split}
\end{equation}
where $p_z$ is the probability of phase flip, $p_{\rm r0}$ is the reset probability to state $\ketbra{0}$ and $p_{\rm r1}$ is the reset probability to state $\ketbra{1}$.
\section{Universal approximation property of our gate-based echo state network}
A family of computational architectures is universal for a certain set of functionals if, for any of these mappings, there exists an instance within the family that approximates it with arbitrary precision \cite{hornik.univ.fnn}. In particular, the universality of a family of reservoir computers for fading memory mappings relies on some general sufficient conditions \cite{maass.nat} -- generalized by some of the Authors to comprehensively include quantum reservoir computing regardless of any specific implementation \cite{monz_uni}. Here, we prove that the gate-based quantum echo state network is universal under the action of nonunital and strictly contractive noise models, thus providing theoretical support to our findings. We begin by recalling the sufficient conditions that ensure the universality, summarized in the following theorem \cite{monz_uni}.
\begin{theorem}
    Let $\mathcal R$ be a family of real echo reservoir computers and let $ \mathcal{I} =\left\{u\colon\mathbb{Z}\rightarrow[0,1]\right\}$ be the set of input sequences. Assume that $\left(\mathcal{I}, d_\omega\right)$ is a compact metric space. If the set of functionals $ \mathcal{H}_{\mathcal{R}}$ associated with $\mathcal{R}:$
    \begin{itemize}
    \item has fading memory
        \item separates points in  $\mathcal{I}$, namely for any pair  $u,v\in \mathcal{I}$ with $u\not=v$ there exists a functional $H\in  \mathcal{H}_{\mathcal{R}}$ such that $H u \not = H v$;
        \item is polynomial algebra, namely for any $R_1,R_2 \in  \mathcal{H}_{\mathcal{R}}$ there exist $R^\lambda_{+}, R_{\cross} \in  \mathcal{H}_{\mathcal{R}}$ such that $  R_1(u) + \lambda R_2(u) = R_+^\lambda(u)\,and \quad R_1(u) \cdot R_2(u) = R_{\cross}(u)$ for any $u\in \mathcal{I}$;
    \end{itemize}
    then $\mathcal R$ is universal for the set of mappings with fading memory. 
    \end{theorem}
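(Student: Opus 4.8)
The plan is to recognize this statement as a reservoir-theoretic transcription of the Stone--Weierstrass theorem. The decisive observation is that, by Definition~\ref{fm.def}, a functional has fading memory exactly when it is continuous on the metric space $(\mathcal{I}, d_\omega)$; since this space is assumed compact and every metric space is Hausdorff, the set of fading-memory mappings is nothing but $C(\mathcal{I}, \mathbb{R})$ endowed with the uniform topology. Universality of $\mathcal{R}$ therefore reduces to proving that $\mathcal{H}_{\mathcal{R}}$ is dense in $C(\mathcal{I}, \mathbb{R})$.

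First I would check that the three hypotheses are precisely the structural ingredients demanded by Stone--Weierstrass. The fading-memory hypothesis places $\mathcal{H}_{\mathcal{R}} \subseteq C(\mathcal{I}, \mathbb{R})$. The polynomial-algebra hypothesis, which supplies $R_+^{\lambda}$ and $R_\times$ for every pair of functionals, states exactly that $\mathcal{H}_{\mathcal{R}}$ is a subalgebra, i.e. closed under linear combination and pointwise product. The point-separation hypothesis is the classical separation condition verbatim.

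Then I would confront the one requirement that is not literally among the stated hypotheses, and which I expect to be the main obstacle: the real Stone--Weierstrass theorem concludes density only for a subalgebra that either contains the constant functionals or has no common zero, meaning that for each $u \in \mathcal{I}$ some element of $\mathcal{H}_{\mathcal{R}}$ does not vanish at $u$. The operations $R_+^{\lambda}$ and $R_\times$ alone generate, from a set of generators, only polynomials without constant term, so this condition must be secured separately. The cleanest route is to exhibit a nonzero constant functional inside $\mathcal{H}_{\mathcal{R}}$ --- for instance by admitting into $\mathcal{R}$ a trivial reservoir whose readout ignores the input --- whence, using $R_+^{\lambda}$, every constant lies in $\mathcal{H}_{\mathcal{R}}$ and the unital form of the theorem applies. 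Equivalently, one invokes the fact that the uniform closure of a point-separating subalgebra equals either all of $C(\mathcal{I}, \mathbb{R})$ or the maximal ideal of functionals vanishing at a single point, and excludes the latter by producing one element not vanishing there.

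Finally, Stone--Weierstrass yields that $\mathcal{H}_{\mathcal{R}}$ is uniformly dense: for every fading-memory functional $H$ and every $\varepsilon > 0$ there is $R \in \mathcal{H}_{\mathcal{R}}$ with $\sup_{u \in \mathcal{I}} |H(u) - R(u)| < \varepsilon$. Transporting this through the bijective functional--filter correspondence recorded just before Definition~\ref{fm.def}, density of functionals is equivalent to arbitrarily good approximation of the associated time-invariant causal filters, which is exactly the asserted universality for fading-memory mappings. The two points meriting care are the compactness assumption --- indispensable both for identifying the target with $C(\mathcal{I}, \mathbb{R})$ and for upgrading pointwise to uniform approximation --- and the constants issue above; everything else is a faithful transcription of the classical algebraic density argument.
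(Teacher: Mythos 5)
Your proposal is correct and follows essentially the same route as the paper, which recalls this theorem from Ref.~\cite{monz_uni} and explicitly notes that the proof is an application of the Stone--Weierstrass theorem on the compact metric space $(\mathcal{I}, d_\omega)$, with fading memory identifying $\mathcal{H}_{\mathcal{R}}$ as a point-separating subalgebra of $C(\mathcal{I},\mathbb{R})$. Your additional remark about securing the constants (or the no-common-zero condition) is a genuine subtlety that the paper leaves implicit, and your resolution via a trivial input-independent reservoir is the standard and correct way to close it.
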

\subsection*{Compactness of the input space}
The mathematical proof of universality relies on the Stone-Weierestrass theorem, which applies to compact metric space. Thus, to ensure universality, we have to check that the input space, equipped with the fading metric introduced above is a compact metric space. This is a general result for time series inputs and it follows from the following theorem, proved also in \cite{monz_uni}. 
\begin{lemma}
   The space of bounded time-dependent sequences $\mathcal{I}$  equipped with the distance $d_\omega$ defined in Def. \ref{fm.def} is a compact metric space. 
\end{lemma}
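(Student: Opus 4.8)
The plan is to use that $(\mathcal{I}, d_\omega)$ is a metric space, so compactness is equivalent to sequential compactness; I would then show that convergence in $d_\omega$ is precisely pointwise (coordinatewise) convergence on the countable index set $\mathbb{Z}^-$, and extract convergent subsequences by a diagonal argument built on Bolzano--Weierstrass. Equivalently, this amounts to identifying $(\mathcal{I}, d_\omega)$ with the product space $[0,1]^{\mathbb{Z}^-}$ carrying the product topology and invoking Tychonoff's theorem; but since this is a countable product of compact intervals, the direct sequential route is cleaner and avoids heavy machinery.

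First I would establish the key equivalence: for $u^{(n)}, u \in \mathcal{I}$ one has $d_\omega(u^{(n)}, u) \to 0$ if and only if $u^{(n)}_t \to u_t$ for every fixed $t \in \mathbb{Z}^-$. The forward implication is immediate, since $|u^{(n)}_t - u_t|\,\omega_t \le d_\omega(u^{(n)}, u)$ and $\omega_t > 0$ is a fixed positive constant for each $t$, so the left side tends to $0$. The backward implication is where the structure of $\omega$ together with the boundedness of the entries must be used. Given $\varepsilon > 0$, because $\omega_t \to 0$ as $t \to -\infty$ while $|u^{(n)}_t - u_t| \le 1$ uniformly in $n$ and $t$ (all entries lie in $[0,1]$), there is $T < 0$ with $\omega_t < \varepsilon$ for all $t < T$; hence the tail is controlled uniformly, $\sup_{t < T} |u^{(n)}_t - u_t|\,\omega_t \le \sup_{t < T} \omega_t \le \varepsilon$. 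On the remaining finite block $\{T, T+1, \dots, 0\}$, pointwise convergence gives $\max_{T \le t \le 0} |u^{(n)}_t - u_t|\,\omega_t \to 0$, so this part is below $\varepsilon$ for all large $n$. Combining the two estimates yields $d_\omega(u^{(n)}, u) < \varepsilon$ eventually.

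Given this equivalence, I would finish by a diagonal extraction. Take any sequence $(u^{(n)})_n \subset \mathcal{I}$; since each coordinate lies in the compact interval $[0,1]$, Bolzano--Weierstrass lets me pass to nested subsequences along which $u^{(n)}_0, u^{(n)}_{-1}, u^{(n)}_{-2}, \dots$ each converge, and the diagonal subsequence then converges at every coordinate to some limit $u$ with $u_t \in [0,1]$, i.e.\ $u \in \mathcal{I}$. By the equivalence, this diagonal subsequence converges to $u$ in $d_\omega$, so $(\mathcal{I}, d_\omega)$ is sequentially compact, hence compact.

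The one genuinely non-formal step is the backward implication of the equivalence: the uniform control of the tail hinges precisely on the decay $\omega_t \to 0$ as $t \to -\infty$ and on the uniform bound on the entries, and this is exactly where the weighted-supremum structure of the fading-memory metric does the real work. Everything else (reduction to sequential compactness, the forward direction, and the diagonalization) is routine once this estimate is in place.
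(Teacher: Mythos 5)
Your proof is correct, and it is the standard argument for this fact; the paper itself does not prove the lemma but defers to Ref.~\cite{monz_uni}, where the reasoning is exactly the route you describe (identifying $d_\omega$-convergence with pointwise convergence on $\mathbb{Z}^-$, i.e.\ with the product topology on $[0,1]^{\mathbb{Z}^-}$, and concluding by Tychonoff or, equivalently, your diagonal extraction). The one point worth stating explicitly is that the argument needs the \emph{uniform} bound $u_t\in[0,1]$ — not merely that each sequence be individually bounded, as the lemma's phrasing might suggest — together with $\omega_t\to 0$ as $t\to-\infty$; both hypotheses enter precisely where you use them, in the tail estimate, and without the uniform bound the statement is false.
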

\subsection*{Echo state property and fading memory of the associated functional}
A reservoir computer fulfills echo state property if the internal dynamics of the network does not depend explicitly on its specific initialization. This implies that the reservoir computer defines a unique, well-posed mapping between $\mathcal{I}$ and $\mathcal{O}$. As discussed above, both the echo state property and the fading memory follow if the dynamics is strictly contractive in the trace norm. Thus, these properties follow from the strict contractivity of the quantum channel that models the evolution of the echo state network, namely,
$$
 \widetilde{\rho}_{t+1} = \mathcal{E}\left(U(u_{t+1}) \rho_t U^{\dag}(u_{t+1}) \right)
$$
 since the measurement process does not erase the strict contractivity of the overall evolution. 
 \subsection*{Separability of the inputs}
 A universal class of reservoir computers has the ability to discriminate any pair of different inputs. Namely, for any given $u,v \in \mathcal{I}$, the dynamics of a reservoir in the class should be such that associated mapping has different outputs when evaluated on $u$ and $v$, respectively. We proceed by constructing it,  exploiting a single qubit reservoir and the action of suitable quantum channels. Denoting with $x\in [0,1]$ the current value of the input, the dynamics of reservoir register is written as
\begin{equation}\label{sing.qub}
    \rho_{t+1} =  \Phi_x(\rho_t) = \mathcal{E} \left(U(x) \rho_t U^\dag (x) \right) \,.
\end{equation}
Then, we can prove the following Theorem that ensures the separability of the input under suitable conditions on the dynamics.
\begin{theorem}
    If the dynamics described by Eq.~\eqref{sing.qub} has a unique fixed point $\rho_*^x$, that depends univocally on $x$, then we have separability.
\end{theorem}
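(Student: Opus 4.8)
The plan is to reduce the separability of the reservoir family to the behaviour of the single-qubit map $\Phi_x(\cdot) = \mathcal{E}\!\left(U(x)\,\cdot\,U^{\dagger}(x)\right)$ on \emph{constant} input streams, and then to bootstrap to arbitrary input sequences using invertibility of $\Phi_x$. First I would record the structural facts already available: since $\mathcal{E}$ (amplitude damping) is strictly contractive in trace norm, so is $\Phi_x$, because unitary conjugation is an isometry for $\norm{\cdot}_1$. By the Banach fixed point theorem this gives directly the unique fixed point $\rho_*^x$ assumed in the statement, and, more importantly, the echo state property: for the constant stream $\bar{x}=(\dots,x,x)$ the reservoir state converges to $\rho_*^x$ independently of the initialisation. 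Hence the functional evaluated on $\bar{x}$ reads out $h(\rho_*^x)$.

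The core of the separation is then the constant-input case. By hypothesis the assignment $x\mapsto\rho_*^x$ is injective, so $x\neq y$ forces $\rho_*^x\neq\rho_*^y$ as density operators. Since two distinct states are always distinguished by the expectation of some Hermitian observable — in the qubit case by some Pauli, by trace duality $\rho\neq\sigma \Rightarrow \exists O:\mathrm{tr}(O\rho)\neq\mathrm{tr}(O\sigma)$ — there is a linear readout $h$ with $h(\rho_*^x)\neq h(\rho_*^y)$, so the family separates the constant streams $\bar{x}$ and $\bar{y}$. This is exactly the step where non-unitality is indispensable: for a unital channel one has $U(x)\mathbb{I}U^{\dagger}(x)=\mathbb{I}$ and $\mathcal{E}(\mathbb{I})=\mathbb{I}$, so $\rho_*^x=\mathbb{I}/d$ for every $x$, the map $x\mapsto\rho_*^x$ is constant, and Theorem~5 of Ref.~\cite{qrc} forces a constant functional. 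The hypothesis of a unique fixed point away from the identity, depending on $x$, is precisely what rescues separability.

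To upgrade from constant streams to arbitrary $u\neq v$, I would exploit that each $\Phi_x$ is an \emph{invertible} linear map on $\mathcal{B}(\mathcal{H})$: unitary conjugation is invertible and, for $\gamma<1$, the amplitude-damping channel is invertible as a linear map. Let $-k$ be the most recent index at which $u$ and $v$ disagree, so $u_j=v_j$ for $j>-k$. Reading at time $0$, the two trajectories are acted on after time $-k$ by the \emph{same} composition $\Phi_{u_0}\circ\dots\circ\Phi_{u_{-k+1}}$; if the two outputs coincided, injectivity of this common composition would force $\rho_{-k}(u)=\rho_{-k}(v)$, i.e.\ $\Phi_{u_{-k}}(\rho_{-k-1}(u))=\Phi_{v_{-k}}(\rho_{-k-1}(v))$ with $u_{-k}\neq v_{-k}$. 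Ruling this coincidence out, and hence establishing injectivity of the state-at-readout map so that some observable separates $u$ from $v$, is the main obstacle.

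I expect that obstacle to be handled geometrically: the images $\Phi_a(\mathcal{S}(\mathcal{H}))$ and $\Phi_b(\mathcal{S}(\mathcal{H}))$ are ellipsoids centred at $\rho_*^a$ and $\rho_*^b$ contracted toward their respective fixed points, and the reachable states $\rho_{-k-1}(u),\rho_{-k-1}(v)$ lie in a small contracted neighbourhood of the fixed-point curve; distinctness of the centres $\rho_*^a\neq\rho_*^b$ together with the contraction ratio $r<1$ should preclude the overlap $\Phi_a(\sigma)=\Phi_b(\tau)$ for reachable $\sigma,\tau$. Alternatively, since the reservoir family is closed under time shifts and the polynomial subalgebra is generated by spatial multiplexing as in Ref.~\cite{chen2}, it suffices to combine the clean constant-input separation with this algebraic richness. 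The delicate, non-routine part is thus the control of accidental coincidences between the two contracted images; the fixed-point/constant-input argument, which is where non-unitality enters, is the conceptually essential but technically easy core.
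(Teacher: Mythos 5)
There is a genuine gap, and you flag it yourself: you never close the case of non-constant input streams. Your constant-stream argument (injectivity of $x\mapsto\rho_*^x$ plus trace duality to find a distinguishing observable) is sound and captures the conceptual core, but for general $u\neq v$ you reduce everything to ruling out $\Phi_{u_{-k}}(\rho_{-k-1}(u))=\Phi_{v_{-k}}(\rho_{-k-1}(v))$ and then only conjecture that a geometric argument about contracted ellipsoids ``should preclude the overlap.'' That route cannot work as stated: since the unitary conjugation $U(x)\,\cdot\,U^{\dagger}(x)$ maps the state space onto itself, the images $\Phi_a(\mathcal{S}(\mathcal{H}))$ and $\Phi_b(\mathcal{S}(\mathcal{H}))$ are not two disjoint or barely overlapping ellipsoids --- they are the \emph{same} set, namely $\mathcal{E}(\mathcal{S}(\mathcal{H}))$. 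For every reachable $\sigma$ there is a $\tau=U^{\dagger}(b)U(a)\,\sigma\,U^{\dagger}(a)U(b)$ with $\Phi_a(\sigma)=\Phi_b(\tau)$, so distinctness of the centres and the contraction ratio give you nothing; the coincidence can only be excluded by pinning down \emph{which} state sits at the point of first disagreement.

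That pinning-down is exactly what the paper's proof does, and it is the step your proposal is missing. The paper reduces (without loss of generality, for the purposes of exhibiting one separating functional) to sequences that agree everywhere except at the final injected value, so that both trajectories carry the \emph{same} state $\rho$ into the differing step, and then chooses that common state to be the fixed point $\rho_*^{u_0}$ of $\Phi_{u_0}$. Then $\Phi_{u_0}(\rho_*^{u_0})=\rho_*^{u_0}$, whereas $\Phi_{v_0}(\rho_*^{u_0})=\rho_*^{u_0}$ would make $\rho_*^{u_0}$ a fixed point of $\Phi_{v_0}$, contradicting uniqueness together with $\rho_*^{u_0}\neq\rho_*^{v_0}$; hence the two states separate at that step, and the separation persists because the subsequent (common) linear evolution is full rank, i.e.\ injective. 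In short, the fixed-point evaluation trick you deploy only asymptotically for constant streams must be applied \emph{locally at the point of disagreement}; once you do that, your step 3 (injectivity of the common suffix of channels) finishes the argument and your proof coincides with the paper's.
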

 In fact, without loss of generality, let assume $u,v \in \mathcal I$ such that $u_t = v_t \,\,\forall t\not=0$ and $u_0 = v_0$, and let denote with $\rho_t^y,\, y=u,v$ the state of the circuit after the injection of $u$ and $v$, respectively. Since the input sequences are equal for $t<0$, we assume moreover that $\rho^u_0 = \rho^v_0$. Preparing the circuit in the fixed point $\rho_*^{u_0}$, immediately we realize that $\rho_1^u \not = \rho_1^v$ since the fixed point $\rho_1^u  = \rho_*^{u_0} $ depends univocally on $u_0$.  Then the dynamics continues separated, as indeed the linear operator $e^{\mathcal{L}_t(y)}$, that encodes the evolution in Eq. \eqref{sing.qub}, is a linear full rank operator. \\
It remains to show that there exists a quantum channel $\mathcal{E}$ such that Eq.~\eqref{sing.qub} has a unique fixed point. This is the case for the amplitude damping channel.
\begin{lemma}
    If the quantum channel in Eq.~\eqref{sing.qub} is the amplitude damping $\mathcal{E}_{\text{ad}}$, then the equation has a unique fixed point that depends univocally on the value of the input.
\end{lemma}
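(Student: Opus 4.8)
The plan is to prove the two assertions separately: first that $\Phi_x$ has a unique fixed point for each fixed input value $x$, and then that the assignment $x\mapsto\rho_*^x$ is injective. For existence and uniqueness I would observe that $\Phi_x=\mathcal{E}_{\mathrm{ad}}\circ\mathcal{U}_x$, where $\mathcal{U}_x(\rho)=U(x)\rho\,U^\dagger(x)$ is unitary conjugation. Unitary conjugation is an isometry of the trace norm, $\norm{\mathcal{U}_x(\rho)-\mathcal{U}_x(\sigma)}_1=\norm{\rho-\sigma}_1$, while in Appendix~\ref{ad.app} the amplitude damping channel was shown to be strictly contractive with constant $r=\max\!\left(1-\tfrac{\gamma}{2},\tfrac{\gamma}{2}\right)<1$ for $\gamma\in(0,1)$. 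Composing, $\Phi_x$ is strictly contractive with the same constant. Since $\mathcal{S}(\mathcal{H})$ is a closed, bounded, convex — hence complete — subset of the finite-dimensional space of Hermitian operators under the trace distance, and $\Phi_x$ is a channel mapping it into itself, the Banach fixed point theorem furnishes a unique fixed point $\rho_*^x$, attained as the limit of the orbit from any initial state. This already shows $\rho_*^x$ is a single-valued function of $x$.

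For the dependence on $x$ I would pass to the Bloch-ball picture, writing $\rho=\tfrac12(\mathbb{I}+\vec{r}\cdot\vec{\sigma})$. There $\mathcal{E}_{\mathrm{ad}}$ acts as the affine map $\vec{r}\mapsto A\vec{r}+\vec{c}$ with $A=\mathrm{diag}\!\left(\sqrt{1-\gamma},\sqrt{1-\gamma},1-\gamma\right)$ and $\vec{c}=(0,0,\gamma)^{\top}$, while $\mathcal{U}_x$ is an orthogonal rotation $R(x)$ fixed by the encoding $U(x)$. Hence $\Phi_x$ is the affine map $\vec{r}\mapsto A\,R(x)\vec{r}+\vec{c}$, and its fixed point solves $\left(\mathbb{I}-A\,R(x)\right)\vec{r}_*^{\,x}=\vec{c}$. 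Because $\norm{A\,R(x)}=\norm{A}=\sqrt{1-\gamma}<1$, the matrix $\mathbb{I}-A\,R(x)$ has no unit eigenvalue and is invertible, giving
\[
\vec{r}_*^{\,x}=\left(\mathbb{I}-A\,R(x)\right)^{-1}\vec{c}\,,
\]
a second, constructive derivation of uniqueness and an explicit handle on $\rho_*^x$.

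It remains to establish injectivity, which I expect to be the crux. Substituting the single-qubit input encoding $U(x)$ of the Methods (a product of input-dependent rotations) makes $R(x)$ an explicit function of $x$, and the closed form above then expresses each Bloch component of $\rho_*^x$ as an explicit rational–trigonometric function of $x$. I would finish by exhibiting one component with nonvanishing derivative: for instance, for an $X$-rotation encoding of angle $\theta(x)$ the $z$-component of $\vec{r}_*^{\,x}$ turns out to be strictly monotone in $\cos\theta$, so $x\mapsto\rho_*^x$ is injective whenever $\theta(x)$ is monotone and confined to a half-period of the cosine. The genuine obstacle is that injectivity is \emph{not} automatic for an arbitrary rotation family: it can fail if the encoding angle sweeps more than half a period or lands on configurations related by the symmetry of the damping ellipsoid. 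The one non-routine step is therefore to verify that the chosen encoding keeps $\theta(x)$ within such an injectivity window over $x\in[0,1]$.
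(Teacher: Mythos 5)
Your route is genuinely different from the paper's, and on the existence--uniqueness half it is the more solid of the two. The paper disposes of the lemma in one line: it recalls that $|0\rangle\langle 0|$ is the unique fixed point of $\mathcal{E}_{\text{ad}}$ alone and then asserts that the fixed point of the composed map is $\rho_*^x = U^\dagger(x)\,|0\rangle\langle 0|\,U(x)$. That formula does not in fact satisfy the fixed-point equation for the dynamics as written in Eq.~\eqref{sing.qub}: substituting it gives $\mathcal{E}_{\text{ad}}\bigl(U(x)\rho_*^x U^\dagger(x)\bigr)=|0\rangle\langle 0|\neq\rho_*^x$ unless $U(x)$ stabilizes $|0\rangle$, so it is the fixed point of the conjugated map $U^\dagger\mathcal{E}_{\text{ad}}(U\rho U^\dagger)U$ rather than of $\mathcal{E}_{\text{ad}}\circ\mathcal{U}_x$, whose true fixed point is generically mixed. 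Your argument --- unitary conjugation is a trace-norm isometry, $\mathcal{E}_{\text{ad}}$ is strictly contractive by Appendix~\ref{ad.app}, hence Banach's theorem on the compact convex set $\mathcal{S}(\mathcal{H})$ gives a unique $\rho_*^x$ --- together with the explicit affine solution $\vec{r}_*^{\,x}=(\mathbb{I}-AR(x))^{-1}\vec{c}$ (invertibility from $\norm{AR(x)}=\sqrt{1-\gamma}<1$) is correct and yields the actual fixed point. What this buys is a proof that survives the reordering of channel and unitary; what the paper's shortcut buys, when the order is the one it implicitly assumes, is a closed-form pure fixed point that makes the $x$-dependence transparent.

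Where both you and the paper stop short is injectivity of $x\mapsto\rho_*^x$, which is what ``depends univocally on the value of the input'' must deliver for the separability theorem that invokes this lemma. The paper asserts it without argument; you correctly identify it as the crux, reduce it to a monotonicity condition on the encoding angle, and leave it as a verification to be done. That verification is genuinely needed and genuinely encoding-dependent: for instance, if $U(x)$ is a rotation about the damping axis, then $R(x)$ fixes $\vec{c}$ and commutes with the $z$-block of $A$, so $\vec{r}_*^{\,x}=\vec{c}/\gamma=(0,0,1)^\top$ for every $x$ and injectivity fails outright; it also fails whenever the angle map $x\mapsto\theta(x)$ is not injective modulo the symmetry of the damping ellipsoid. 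So as it stands your proposal is a correct and more careful uniqueness proof carrying the same unproven injectivity claim as the original; closing it requires pinning down the specific $RZ$--$RX$ encoding of the Methods and checking that its angles stay within a half-period over $x\in[0,1]$.
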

\begin{proof}
    It suffices to recall that the unique fixed point of the amplitude damping channel is the $|0\rangle$ state, namely $\mathcal{E}_{\text{ad}}(\rho) = \rho$ if and only if $\rho =|0\rangle \langle 0| $. Then, the fixed point of the composed map $\rho_{t+1} = \mathcal{E}_{\text{ad}}\left(U(x) \rho_t U^\dag (x) \right) $ is simply $\rho_*^x = U^\dag(x) |0\rangle \langle 0| U(x) $, which depends univocally on $x$.
\end{proof}
We remark that, on the other hand, if the quantum channel $\mathcal E$ is unital, namely if it preserves the identity density operator $\mathcal{E}\left(\mathbb{I}\right) = \mathbb{I}$, the dynamics has a fixed point that does not depend on the input values. Indeed, by definition of the unitality of a quantum channel, we have that 
$$
\Phi_x(\rho_*) = \rho_* \iff \rho_* = U(x)\mathbb{I}U^\dag(x) = U(x)U^\dag(x)\mathbb{I} = \mathbb{I}\,.
$$
regardless of the value of the input $x$.
 \subsection*{Polynomial algebra structure by spatial multiplexing}
  It remains to prove that the associated mappings form a polynomial algebra, to ensure the universality of our family of quantum echo state networks. This can be achieved by exploiting spatial multiplexing, as widely known in the context of quantum reservoir computing \cite{chen2, nokkala2021gaussian, zamb_dissi}.  It consists of parallelly preparing and running the two networks and taking the sum, or the product, of the readouts as the overall readout function. Then, for any two echo state networks whose associated mapping are respectively $\mathcal{C}_1$ and $\mathcal{C}_2$, it is possible to construct the systems whose associated mapping is the sum $\mathcal{C}_1 + \mathcal{C}_2$ and the product $\mathcal{C}_1 \cdot \mathcal{C}_2$. Adding this multiplexed architecture to the general class, we recover the structure of the polynomial algebra of the mappings associated. Thus, combining the previous Subsections, we have proved that, under suitable conditions on the quantum channel - for example, fulfilled by the amplitude damping channel -  our class of gate-based echo state networks described is universal for the class of filters with fading memory.
\end{document}